\providecommand{\U}[1]{\protect\rule{.1in}{.1in}}
\newtheorem{theorem}{Theorem}
\newtheorem{corollary}[theorem]{Corollary}
\newtheorem{example}[theorem]{Example}
\newtheorem{lemma}[theorem]{Lemma}
\newenvironment{proof}[1][Proof]{\noindent\textbf{#1.} }{\ \rule{0.5em}{0.5em}}
\begin{document}

\title{A quantum version of randomization criteria}
\author{Keiji Matsumoto}
\maketitle

\begin{abstract}
In classical statistical decision theory, comparison of experiments plays very
important role. Especially, so-called randomization criteria is most
important. In this paper, we establish two kinds of quantum analogue of these
concepts, and apply to some examples.

\end{abstract}

\section{Introduction}

In classical statistical decision theory, comparison of experiments plays very
important role. Especially, so-called randomization criteria is most
important. In this paper, we establish two kinds of quantum analogue these
concepts, and apply to some examples.

\section{Review of classical theory}

\subsection{Desicion theory, frame work}

A \textit{statistical experiment} $\mathcal{E=}\left(  \mathcal{X}%
,\mathfrak{X},\left\{  P_{\theta};\theta\in\Theta\right\}  \right)  $ consists
of four parts. First, the \textit{data space} $\mathcal{X}$, or the totality
of all the possible data $x$. Second, a collection $\,\mathfrak{X}$ of subsets
of $\mathcal{X}$, indicating minimal unit of the events which the statistician
is concerned with.

The third element of an experiment is a \textit{parameter set} $\Theta$, which
indeces all the possible explaining theory for the outcomming data. To each
$\theta$ corresponds a probablity distribution $P_{\theta}$ of data, which is
the fourth element of an experiment. $P_{\theta}$ has to be $\mathfrak{X}$-measurable.

A statistician makes decision based on the data $x\in\mathcal{X}$. The
totality of possible decisions made by the statisitcian is the
\textit{decision space} $\mathcal{D}$, which is a topological space and is
equipped with Baire $\sigma$-field $\mathfrak{D}$. For example, if the
statistician is estimating the value of the parameter $\theta\in
\mathcal{\Theta}$ behind the data from the data $x\in$ $\mathcal{X}$, we
define $\left(  \mathcal{D},\mathfrak{D}\right)  =\left(
\mathbb{R}
^{l},\mathfrak{B}\left(
\mathbb{R}
^{l}\right)  \right)  $. If the statistician is trying to distinguish whether
$\theta\in\Theta_{0}$ or $\theta\in\Theta_{1}$, $\left(  \mathcal{D}%
,\mathfrak{D}\right)  =\left(  \left\{  0,1\right\}  ,2^{\left\{  0,1\right\}
}\right)  $.

The performance is measured by a \textit{loss function }$l_{\theta}\left(
t\right)  $, or a function which depends not only on the decision
$t\in\mathcal{D}$ but also on the true value of the parameter $\theta\in$
$\Theta$. It is assumed that the function $t\rightarrow l_{\theta}\left(
t\right)  $ is lower semicontinuous, non-negative, and
\[
-1\leq l_{\theta}\left(  t\right)  \leq1.
\]
For example, in case of estimation of $\theta$, the loss function may be
\[
l_{\theta}\left(  t\right)  =\left\{
\begin{array}
[c]{cc}%
0, & \text{if }\left\Vert t-\theta\right\Vert \leq c\\
1, & \text{othewise}%
\end{array}
\right.  .
\]
In case of testing \ $\theta\in\Theta_{0}$ against $\theta\in\Theta_{1}$,
$l_{\theta}\left(  t\right)  $ may be chosen so that $l_{\theta}\left(
1\right)  =1-l_{\theta}\left(  0\right)  $ and
\[
l_{\theta}\left(  0\right)  =\left\{
\begin{array}
[c]{cc}%
0, & \text{if }\theta\in\Theta_{0}\\
1 & \text{if }\theta\in\Theta_{1}%
\end{array}
\right.  .
\]

In general the statistician's startegy (decision) is described by a bilinear
map $D$%
\[
D:l_{\theta}\times P_{\theta}\rightarrow D\left(  l_{\theta},P_{\theta
}\right)  \in%
\mathbb{R}
,
\]
which satisfies
\begin{align*}
\left\vert D\left(  f,P\right)  \right\vert  &  \leq\left\Vert f\right\Vert
\left\Vert P\right\Vert _{1},\\
D\left(  f,P\right)   &  \geq0,\,\,\,\,\,\,\text{if }f\geq0,\,P\geq0,\\
D\left(  1,P\right)   &  =P\left(  \mathcal{D}\right)  .
\end{align*}
Here, $P$ is a member of $L$-space of $\mathcal{E}$, or a bounded singned
measure such that $P\perp\nu$ for all $\nu$ with $P_{\theta}\perp\nu$,
$\forall\theta\in\Theta$. The meaning of $D\left(  l_{\theta},P_{\theta
}\right)  $ is average of $l_{\theta}$ when the statistician takes the
decision corresponding to $D$, and in many cases, there is a Makov kernel
$R_{D}$ such that
\[
D\left(  l_{\theta},P_{\theta}\right)  =\int_{\mathcal{X}}\int_{\mathcal{D}%
}l_{\theta}\left(  t\right)  R_{D}\left(  \mathrm{d}t,x\right)  P_{\theta
}\left(  \mathrm{d}x\right)  .
\]
$\ D$ is said to be $k$\textit{-decision} when $\left\vert \mathcal{D}%
\right\vert =k$.

Note here that $\Theta$ can be any set, and the function $\theta\rightarrow
P_{\theta}\left(  B\right)  $ can be any function.

\subsection{Defficiency}

Let $e:\mathcal{\Theta}\rightarrow%
\mathbb{R}
_{+}$ be a function with $0\leq e_{\theta}\leq1$, $\left\Vert l_{\theta
}\right\Vert :=\sup_{t\in\mathcal{D}}\left\vert l_{\theta}\left(  t\right)
\right\vert $, and $\left\Vert l\right\Vert :=\sup_{\theta\in\Theta}\left\Vert
l_{\theta}\right\Vert $. An experiment $\mathcal{E=}\left(  \mathcal{X}%
,\mathfrak{X},\left\{  P_{\theta};\theta\in\Theta\right\}  \right)  $ is said
to be $e$-\textit{deficient} relative to another experiment $\mathcal{F=}%
\left(  \mathcal{Y},\mathfrak{Y},\left\{  Q_{\theta};\theta\in\Theta\right\}
\right)  $ (denoted by $\mathcal{E}\geq_{e}\mathcal{F}$), if and only if, for
any loss function $l$ with $\left\Vert l\right\Vert \leq1$, for any finite
subset $\Theta_{0}$ of $\Theta$, and any decision $D$ on the experiment
$\mathcal{F}$, there is a decision $D^{\prime}$ on the experiment
$\mathcal{E}$ such that
\begin{equation}
D^{\prime}\left(  l_{\theta},P_{\theta}\right)  \leq D\left(  l_{\theta
},\,Q_{\theta}\right)  +e_{\theta},\,\forall\theta\in\Theta_{0}\,.
\label{deficient}%
\end{equation}
Defficiency $\delta\left(  \mathcal{E},\mathcal{F}\right)  $ is defined by%
\[
\delta\left(  \mathcal{E},\mathcal{F}\right)  :=\inf_{e}\left\{  \sup_{\theta
}e_{\theta}\,;\mathcal{E\geq}_{e}\mathcal{F}\right\}  ,
\]
$\mathcal{E\geq}_{0}\mathcal{F}$ is denoted by $\mathcal{E\geq F}$, and when
this holds, $\mathcal{E}$ is said to be more informative than $\mathcal{F}$.

An experiment $\mathcal{E}$ is said to be $e$-\textit{deficient} for
$k$-decision problems relative to another experiment $\mathcal{F}$, if and
only if, for any loss function $l$, any finite subset $\Theta_{0}$ of $\Theta
$, and any $k$-decision $D$ on the experiment $\mathcal{F}$ , there is a
$k$-decision $D^{\prime}$ on the experiment $\mathcal{E}$ such that
(\ref{deficient}) holds for any $\theta\in\Theta_{0}$ (denoted by
$\mathcal{E\geq}_{e,k}\mathcal{F}$). Also, we define deficiency $\delta
_{k}\left(  \mathcal{E},\mathcal{F}\right)  $ for $k$-decision problems by
restricting $D^{\prime}$ and $D$ to the $k$-desisions on $\mathcal{E}$ and
$\mathcal{F}$, respectively. $\mathcal{E\geq}_{0,k}\mathcal{F}$ is denoted by
$\mathcal{E\geq}_{k}\mathcal{F}$, and when this holds, $\mathcal{E}$ is more
informative than $\mathcal{F}$ for $k$-decision problems. For notational
convienience, we define $\delta_{\infty}:=\delta$, and $e$-deficiency with
respect to $\infty$-decision problems as $e$-deficiency.

Finally, we define
\begin{align*}
\Delta\left(  \mathcal{E},\mathcal{F}\right)   &  :=\max\left\{  \delta\left(
\mathcal{E},\mathcal{F}\right)  ,\delta\left(  \mathcal{F},\mathcal{E}\right)
\right\}  ,\,\,\\
\Delta_{k}\left(  \mathcal{E},\mathcal{F}\right)   &  :=\max\left\{
\delta_{k}\left(  \mathcal{E},\mathcal{F}\right)  ,\delta_{k}\left(
\mathcal{F},\mathcal{E}\right)  \right\}  ,\\
\,\Delta_{\infty}  &  :=\Delta.
\end{align*}
When $\Delta\left(  \mathcal{E},\mathcal{F}\right)  =0$ (, $\Delta_{k}\left(
\mathcal{E},\mathcal{F}\right)  =0$, resp.) we say $\mathcal{E}$ and
$\mathcal{F}$ are \textit{equivallent \ (,equivalent for }$k$-decision
problems, resp.), and represent the situation by the simbol $\mathcal{E}%
\sim\mathcal{F}$ \textit{(,}$\mathcal{E}\sim_{k}\mathcal{F}$, resp.).

Below, $\mathcal{P}_{\Theta}$ is the set of all probability measures over
$\Theta$ whose support is a finite set. One can prove the following necessary
and sufficient condition for $\mathcal{E\geq}_{e,k}\mathcal{F}$
\cite{Strasser}\cite{Torgersen}:

\begin{description}
\item[(i)] For any loss function $l$ with $0\leq l_{\theta}\left(  t\right)
\leq1$, and any $k$-decision $D$ on the experiment $\mathcal{F}$ , there is a
$k$-decision $D^{\prime}$ on the experiment $\mathcal{E}$ such that, for any
$\pi\in\mathcal{P}_{\Theta}$,
\[
\int_{\Theta}D^{\prime}\left(  l_{\theta},P_{\theta}\right)  \mathrm{d}\pi
\leq\int_{\Theta}\left\{  D\left(  l_{\theta},Q_{\theta}\right)  +e_{\theta
}\right\}  \mathrm{d}\pi.
\]

\item[(ii)] For any loss funcetion $l$, and any $k$-decision $D$ on the
experiment $\mathcal{F}$, there is a $k$-decision $D^{\prime}$ on
$\mathcal{E}$ such that
\[
\left\Vert D^{\prime}\left(  l_{\theta},P_{\theta}\right)  -D\left(
l_{\theta},Q_{\theta}\right)  \right\Vert \leq e_{\theta},\,\,\forall\theta
\in\Theta.
\]

\end{description}

Also, $\mathcal{E\geq}_{e}\mathcal{F}$ is equivalent to :

\begin{description}
\item[(iii)] (\textit{Randomization criterion}) There is an affine positive
map $\Lambda$ such that%
\[
\left\Vert \Lambda\left(  P_{\theta}\right)  -Q_{\theta}\right\Vert _{1}\leq
e_{\theta}\,,\,\,\forall\theta\in\Theta.
\]

\end{description}

\section{Notations, definitions, and basic facts}

$\mathcal{H}$ and $\mathcal{K}$ are separable Hilbert spaces. $\mathcal{B}%
\left(  \mathcal{H}\right)  $, $\mathcal{B}_{1}\left(  \mathcal{H}\right)  $,
and $\mathcal{B}_{0}\left(  \mathcal{H}\right)  $ is the space of bounded
operators, trace class operators, and compact operators over Hilbert space
$\mathcal{H}$, respcetively. Let $\mathcal{B}_{0}\left(  \mathcal{H}\right)
^{\ast}$ be the set of all the bounded linear functional on $\mathcal{B}%
_{0}\left(  \mathcal{H}\right)  $ relative to the topology induced by operator
norm $\left\Vert \cdot\right\Vert $,%
\[
\left\Vert X\right\Vert :=\sup_{\psi\in\mathcal{H}}\frac{\left\Vert
X\left\vert \psi\right\rangle \right\Vert }{\left\Vert \left\vert
\psi\right\rangle \right\Vert }.
\]
Then, $\mathcal{B}_{0}\left(  \mathcal{H}\right)  ^{\ast}\simeq\mathcal{B}%
_{1}\left(  \mathcal{H}\right)  $ (Proposition\thinspace2.6.13 of
\cite{Bratteli}), by the duality
\[
X\in\mathcal{B}_{1}\left(  \mathcal{H}\right)  ,Y\in\mathcal{B}_{0}\left(
\mathcal{H}\right)  \rightarrow\,\mathrm{tr}\,XY.
\]
Unless otherwise mentioned, $\mathcal{B}_{1}\left(  \mathcal{H}\right)
\simeq\mathcal{B}_{0}\left(  \mathcal{H}\right)  ^{\ast}$ is topologized with
weak* topology.

A map $\Lambda^{\ast}:\mathcal{B}\left(  \mathcal{K}\right)  \rightarrow
\mathcal{B}\left(  \mathcal{H}\right)  $ is said to be completely positive if
and only if $\Lambda\otimes\mathbf{I}_{n}$ is positive for any $n$, where
$\mathbf{I}_{n}$ is the identity map form $\mathcal{B}\left(  \mathcal{%
\mathbb{C}
}^{n}\right)  $ to $\mathcal{B}\left(  \mathcal{%
\mathbb{C}
}^{n}\right)  $. This is equivalent to
\begin{equation}
\sum_{i.j=1}^{n}\left\langle \psi_{i}\right\vert \Lambda\left(  X_{i}^{\ast
}X_{j}\right)  \left\vert \psi_{j}\right\rangle \,\geq0, \label{cp-2}%
\end{equation}
for any $\left\{  \left\vert \varphi_{i}\right\rangle \right\}  _{i=1}^{n}$ (
$X_{i}\in\mathcal{B}\left(  \mathcal{K}\right)  $), any $\left\{  \left\vert
\psi_{i}\right\rangle \right\}  _{i=1}^{n}$ ($\left\vert \psi_{i}\right\rangle
\in\mathcal{H}$) for any $n$. \ We consider a dual of $\Lambda^{\ast}$, which
is $\Lambda:\mathcal{B}_{1}\left(  \mathcal{H}\right)  \rightarrow
\mathcal{B}_{1}\left(  \mathcal{K}\right)  $. $\Lambda$ is said to be trance
preserving if
\begin{equation}
\mathrm{tr}\,\Lambda\left(  X\right)  =\mathrm{tr}\,X,\,\forall X\in
\mathcal{B}_{1}\left(  \mathcal{H}\right)  . \label{trace-preserving}%
\end{equation}

We put
\[
Ch\left(  \mathcal{H},\mathcal{K}\right)  :=\left\{  \Lambda;\,\Lambda\text{
linear, (\ref{cp-2}), (\ref{trace-preserving})}\right\}  ,
\]
and
\[
\widetilde{Ch}\left(  \mathcal{H},\mathcal{K}\right)  :=\left\{
\Lambda;\,\Lambda\text{ linear, (\ref{cp-2}), }\mathrm{tr}\,\Lambda\left(
X\right)  \leq\mathrm{tr}\,X\forall X\in\mathcal{B}_{1}\left(  \mathcal{H}%
\right)  \right\}  ,
\]
Then, $\Lambda\in\widetilde{Ch}\left(  \mathcal{H},\mathcal{K}\right)  $
saitisfies
\begin{equation}
\left\Vert \Lambda\left(  X\right)  \right\Vert _{1}\leq2\left\Vert
X\right\Vert _{1}. \label{norm-upper}%
\end{equation}
$Ch\left(  \mathcal{H},\mathcal{K}\right)  $ and $\widetilde{Ch}\left(
\mathcal{H},\mathcal{K}\right)  $ can be viewed as a subset of $\left(
\mathcal{B}_{1}\left(  \mathcal{K}\right)  \right)  ^{\mathcal{B}_{1}\left(
\mathcal{H}\right)  }$. In this view,
\begin{equation}
Ch\left(  \mathcal{H},\mathcal{K}\right)  \subset P\left(  \mathcal{H}%
,\mathcal{K}\right)  :=\prod_{X\in\mathcal{B}_{1}\left(  \mathcal{H}\right)
}\left\{  Y\,;\,Y\in\mathcal{B}_{1}\left(  \mathcal{K}\right)  ,\,\left\Vert
Y\right\Vert _{1}\leq2\left\Vert X\right\Vert _{1}\right\}  .
\label{ch-include}%
\end{equation}
Observe that $\left\{  Y\,;\,Y\in\mathcal{B}_{1}\left(  \mathcal{K}\right)
,\,\left\Vert Y\right\Vert _{1}\leq2\left\Vert X\right\Vert _{1}\right\}  $ is
weak*-comapct by Alaoglu's theorem. Therefore, by Tychonoff's theorem
(Theorem\thinspace\ref{th:tychonoff} in Appendix \ref{appendix:product}),
$P\left(  \mathcal{H},\mathcal{K}\right)  $ is compact relative to the product
topology. From here, unless otherwise mentioned, $\left(  \mathcal{B}%
_{1}\left(  \mathcal{K}\right)  \right)  ^{\mathcal{B}_{1}\left(
\mathcal{H}\right)  }$ and $Ch\left(  \mathcal{H},\mathcal{K}\right)  $ are
topologized by the product topology, where $\mathcal{B}_{1}\left(
\mathcal{K}\right)  $ is topologized by weak* topology.

\begin{lemma}
\label{lem:compact}The set $\widetilde{Ch}\left(  \mathcal{H},\mathcal{K}%
\right)  $, viewed as a subset of $\mathcal{B}_{1}\left(  \mathcal{K}\right)
^{\mathcal{B}_{1}\left(  \mathcal{H}\right)  }$, \ is compact and convex. In
addition, if $X\in\mathcal{B}_{1}\left(  \mathcal{H}\right)  $ and
$Y\in\mathcal{B}_{0}\left(  \mathcal{K}\right)  $, the linear functional
\[
\Lambda\in\widetilde{Ch}\left(  \mathcal{H},\mathcal{K}\right)  \rightarrow
\mathrm{tr}\,\Lambda\left(  X\right)  Y\in%
\mathbb{C}
\]
is continuous.
\end{lemma}

\begin{proof}
Since $P\left(  \mathcal{H},\mathcal{K}\right)  $ is compact relative to the
product topology, it suffices to show $\widetilde{Ch}\left(  \mathcal{H}%
,\mathcal{K}\right)  $ is closed. Let $\left\{  \Lambda_{\alpha}\right\}  $ a
net in $\widetilde{Ch}\left(  \mathcal{H},\mathcal{K}\right)  $, with
$\Lambda_{\alpha}\rightarrow\Lambda$ in the product topology. Then,
\begin{align*}
&  \Lambda_{\alpha}\left(  a_{1}X_{1}+a_{2}X_{2}\right)  -\left\{
a_{1}\Lambda_{\alpha}\left(  X_{1}\right)  +a_{2}\Lambda_{\alpha}\left(
X_{2}\right)  \right\}  =0\\
&  \rightarrow\Lambda\left(  a_{1}X_{1}+a_{2}X_{2}\right)  -\left\{
a_{1}\Lambda\left(  X_{1}\right)  +a_{2}\Lambda\left(  X_{2}\right)  \right\}
=0,
\end{align*}
and
\[
\sum_{i.j=1}^{n}\left\langle \psi_{i}\right\vert \Lambda_{\alpha}^{\ast
}\left(  X_{i}^{\ast}X_{j}\right)  \left\vert \psi_{j}\right\rangle
\geq0\rightarrow\sum_{i.j=1}^{n}\left\langle \psi_{i}\right\vert
\Lambda\left(  X_{i}^{\ast}X_{j}\right)  \left\vert \psi_{j}\right\rangle
\geq0,
\]
where the convergence is in terms of weak* topology. Also, if $X\geq0$ and
$\left\{  \left\vert i\right\rangle \right\}  $ is a CONS of $\mathcal{K}$,
\[
\sum_{i=1}^{n}\left\langle i\right\vert \Lambda_{\alpha}\left(  X\right)
\left\vert i\right\rangle \leq\mathrm{tr}\,\Lambda_{\alpha}\left(  X\right)
\leq\mathrm{tr}\,X\rightarrow\sum_{i=1}^{n}\left\langle i\right\vert
\Lambda\left(  X\right)  \left\vert i\right\rangle \leq\mathrm{tr}\,X.
\]
Since this holds for any $n$, $\mathrm{tr}\,\Lambda\left(  X\right)
\leq\mathrm{tr}\,X$. Therefore, $\Lambda$ is also in $\widetilde{Ch}\left(
\mathcal{H},\mathcal{K}\right)  $, implying $\widetilde{Ch}\left(
\mathcal{H},\mathcal{K}\right)  $ is closed and compact.

To prove the second assertion, let $\left\{  \Lambda_{\alpha}\right\}  $ a net
in $\widetilde{Ch}\left(  \mathcal{H},\mathcal{K}\right)  $, with
$\Lambda_{\alpha}\rightarrow\Lambda$ in the product topology. Then,
$\Lambda_{\alpha}\left(  X\right)  \rightarrow\Lambda\left(  X\right)  $ in
weak* topology, for any $X\in\mathcal{B}_{1}\left(  \mathcal{H}\right)  $.
Therefore, $\mathrm{tr}\,\Lambda_{\alpha}\left(  X\right)  Y\rightarrow$
$\mathrm{tr}\,\Lambda\left(  X\right)  Y$, for any $Y\in\mathcal{B}_{0}\left(
\mathcal{H}\right)  $. Hence, we have the assertion.
\end{proof}

The set of POVM's over a measurable space $\left(  \mathcal{D},\mathfrak{D}%
\right)  $ in the Hilbert space $\mathcal{H}$ is denoted by $Mes\left(
\mathcal{D},\mathfrak{D};\mathcal{H}\right)  $. The space of singed measures
and singed finitely additive measures over $\left(  \mathcal{D},\mathfrak{D}%
\right)  $ is denoted by $ca\left(  \mathcal{D},\mathfrak{D}\right)  $ and
$ba\left(  \mathcal{D},\mathfrak{D}\right)  $, respectively. They are metrized
by the total variation norm, which is denoted by $\left\Vert \cdot\right\Vert
_{1}$. The space of bounded measurable function is denoted by $L^{b}\left(
\mathcal{D},\mathfrak{D}\right)  \mathfrak{.}$ Then, $ba\left(  \mathcal{D}%
,\mathfrak{D}\right)  \simeq L^{b}\left(  \mathcal{D},\mathfrak{D}\right)
^{\ast}$. We topologize $ca\left(  \mathcal{D},\mathfrak{D}\right)  $ and
$ba\left(  \mathcal{D},\mathfrak{D}\right)  $ with weak* topology.

The map
\begin{equation}
f_{M}:X\in\mathcal{B}_{1}\left(  \mathcal{H}\right)  \rightarrow
\mathrm{tr}\,XM\left(  \cdot\right)  \in ca\left(  \mathcal{D},\mathfrak{D}%
\right)  \label{def:f}%
\end{equation}
is linear, positive
\[
f_{M}\left(  \rho\right)  \left(  B\right)  \geq0,\,\,\forall\rho
\in\mathcal{B}_{1}\left(  \mathcal{H}\right)  \,,\rho\geq0,\,\forall
B\in\mathfrak{D,}%
\]
and $f_{M}\left(  X\right)  \left(  \mathcal{D}\right)  =\mathrm{tr}\,X$.
Conversely, any linear, bounded, and positive map from $\mathcal{B}_{1}\left(
\mathcal{H}\right)  $ to $ca\left(  \mathcal{D},\mathfrak{D}\right)  $ with
$f\left(  X\right)  \left(  \mathcal{D}\right)  =\mathrm{tr}\,X$ is in this form.

The space of elements $f$ of $\mathcal{B}\left(  \mathcal{B}_{1}\left(
\mathcal{H}\right)  ,ba\left(  \mathcal{D},\mathfrak{D}\right)  \right)  $
with positivity and
\begin{equation}
f\left(  X\right)  \left(  \mathcal{D}\right)  =\mathrm{tr}\,X,\,\forall
X\in\mathcal{B}_{1}\left(  \mathcal{H}\right)  ,\, \label{f(x)(d)=trx}%
\end{equation}
is denoted by $\overline{Mes}\left(  \mathcal{D},\mathfrak{D};\mathcal{H}%
\right)  $. Also, replacing the condition (\ref{f(x)(d)=trx}) by
\[
f\left(  X\right)  \left(  \mathcal{D}\right)  =\mathrm{tr}\,X,\,\forall
X\in\mathcal{B}_{1}\left(  \mathcal{H}\right)  ,
\]
we deifne $\widetilde{Mes}\left(  \mathcal{D},\mathfrak{D};\mathcal{H}\right)
$. $\overline{Mes}\left(  \mathcal{D},\mathfrak{D};\mathcal{H}\right)  $ and
$\widetilde{Mes}\left(  \mathcal{D},\mathfrak{D};\mathcal{H}\right)  $ can be
viewed as a subset of $ba\left(  \mathcal{D},\mathfrak{D}\right)
^{\mathcal{B}_{1}\left(  \mathcal{H}\right)  }$. In this view,%
\begin{align*}
&  \widetilde{Mes}\left(  \mathcal{D},\mathfrak{D};\mathcal{H}\right) \\
&  \subset PC\left(  \mathcal{D},\mathfrak{D};\mathcal{H}\right)
:=\prod_{X\in\mathcal{B}_{1}\left(  \mathcal{H}\right)  }\left\{  \mu
\,;\,\mu\in ba\left(  \mathcal{D},\mathfrak{D}\right)  ,\left\Vert
\mu\right\Vert _{1}\leq2\left\Vert X\right\Vert _{1}\right\}  .
\end{align*}
By Alaoglu's theorem, $\left\{  \mu\,;\,\mu\in ba\left(  \mathcal{D}%
,\mathfrak{D}\right)  ,\left\Vert \mu\right\Vert _{1}\leq2\left\Vert
X\right\Vert _{1}\right\}  $ is weak*-compact. Therefore, by Tychonoff's
theorem (Theorem\thinspace\ref{th:tychonoff}), $PC\left(  \mathcal{D}%
,\mathfrak{D};\mathcal{H}\right)  $ is compact relative to the product
topology. From here, unless otherwise mentioned, $ba\left(  \mathcal{D}%
,\mathfrak{D}\right)  ^{\mathcal{B}_{1}\left(  \mathcal{H}\right)  }$ is
topologized with the product topology, where $ba\left(  \mathcal{D}%
,\mathfrak{D}\right)  $ is topologized with weak* topology. The proof of the
following lemma is almost parallel with the one of Lemma\thinspace
\ref{lem:compact}, thus is omitted.

\begin{lemma}
\label{lem:c-compact}Suppose $\mathcal{D}$ is locally compact. The set
$\widetilde{Mes}\left(  \mathcal{D},\mathfrak{D};\mathcal{H}\right)  $, viewed
as a subset of $ba\left(  \mathcal{D},\mathfrak{D}\right)  ^{\mathrm{\,}%
\mathcal{B}_{1}\left(  \mathcal{H}\right)  }$, \ is compact and convex. In
addition, if $X\in\mathcal{B}_{1}\left(  \mathcal{H}\right)  $ and $l\in
L^{b}\left(  \mathcal{D},\mathfrak{D}\right)  $, the linear functional
\[
f\in\widetilde{Mes}\left(  \mathcal{D},\mathfrak{D};\mathcal{H}\right)
\rightarrow\int_{\mathcal{D}}l\left(  t\right)  f\left(  X\right)  \left(
\mathrm{d}\,t\right)  \in%
\mathbb{R}
\]
is continuous.
\end{lemma}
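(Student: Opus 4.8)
I would prove this by transplanting the proof of Lemma~\ref{lem:compact} essentially verbatim, under the dictionary: replace the predual pairing $\mathcal{LC}(\mathcal{K})\subset\mathcal{LC}(\mathcal{K})^{\ast}\simeq\mathcal{S}_1(\mathcal{K})$ by $L^{\infty}(\mathcal{D},\mathfrak{D})\subset L^{\infty}(\mathcal{D},\mathfrak{D})^{\ast}\simeq ba(\mathcal{D},\mathfrak{D})$, where (exactly as in Lemma~\ref{lem:compact}) the coordinate space $ba(\mathcal{D},\mathfrak{D})$ is taken with its weak* topology; and replace complete positivity together with trace preservation by positivity together with the normalization $f(\rho)(\mathcal{D})=\mathrm{tr}\,\rho$ for $\rho\geq 0$. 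There are then three things to establish — compactness and convexity of $\overline{Mes}(\mathcal{D},\mathfrak{D};\mathcal{H})$, local convexity of the ambient product, and continuity of the integration functional — and I would treat them in that order.

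\textbf{Compactness and convexity.} Each $f\in\overline{Mes}(\mathcal{D},\mathfrak{D};\mathcal{H})$ is bounded: for $\rho\geq0$ the finitely additive measure $f(\rho)$ is positive, so $\|f(\rho)\|_{1}=f(\rho)(\mathcal{D})=\mathrm{tr}\,\rho=\|\rho\|_{1}$, and decomposing a general $X\in\mathcal{S}_1(\mathcal{H})$ into real and imaginary parts and each into its Jordan decomposition gives $\|f(X)\|_{1}\leq C\|X\|_{1}$ for an absolute constant $C$. Hence $\overline{Mes}(\mathcal{D},\mathfrak{D};\mathcal{H})\subset\bigl(C\cdot\mathrm{ball}\,ba(\mathcal{D},\mathfrak{D})\bigr)^{\mathrm{ball}\,\mathcal{S}_1(\mathcal{H})}$. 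By Alaoglu's theorem $C\cdot\mathrm{ball}\,ba(\mathcal{D},\mathfrak{D})$ is weak*-compact, so by Tychonoff's theorem the product is compact, and it remains to show $\overline{Mes}(\mathcal{D},\mathfrak{D};\mathcal{H})$ is closed there. Let $f_{\alpha}\to f$ in the product topology; then for every $X\in\mathrm{ball}\,\mathcal{S}_1(\mathcal{H})$ and every $g\in L^{\infty}(\mathcal{D},\mathfrak{D})$ one has $\int_{\mathcal{D}}g\,\mathrm{d}f_{\alpha}(X)\to\int_{\mathcal{D}}g\,\mathrm{d}f(X)$. Proceeding as in Lemma~\ref{lem:compact}, additivity and homogeneity pass to the limit for operators in the ball, and extending $f$ off the ball by $f(X):=\|X\|_{1}f(X/\|X\|_{1})$ yields an honest element of $\mathcal{B}(\mathcal{S}_1(\mathcal{H}),ba(\mathcal{D},\mathfrak{D}))$; taking $g=\mathbf{1}_{B}$ for $B\in\mathfrak{D}$ shows $f_{\alpha}(X)(B)\to f(X)(B)$, so positivity is inherited, and taking $g\equiv1$ shows $f(\rho)(\mathcal{D})=\mathrm{tr}\,\rho$ is inherited. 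Thus $f\in\overline{Mes}(\mathcal{D},\mathfrak{D};\mathcal{H})$. Convexity is immediate, since positivity and the normalization are stable under convex combinations.

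\textbf{Local convexity and the continuous functional.} Since $ba(\mathcal{D},\mathfrak{D})\simeq L^{\infty}(\mathcal{D},\mathfrak{D})^{\ast}$ is locally convex in the weak* topology and has a convex local base there, the product $\bigl(ba(\mathcal{D},\mathfrak{D})\bigr)^{\mathrm{ball}\,\mathcal{S}_1(\mathcal{H})}$ is locally convex with local base the cylinder sets $\{x:x_{y}\in U_{y},\,y\in F\}$, $F$ finite and each $U_{y}$ in a convex local base — word for word the argument in Lemma~\ref{lem:compact}. For the last assertion, fix $X\in\mathcal{S}_1(\mathcal{H})$ and $l\in L^{\infty}(\mathcal{D},\mathfrak{D})$. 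The map $f\mapsto f(X)$ is continuous from the product topology into $ba(\mathcal{D},\mathfrak{D})$ with its weak* topology: for $X$ in the unit ball it is a coordinate projection, and a general $X$ is handled by homogeneity. And $\mu\mapsto\int_{\mathcal{D}}l(t)\,\mu(\mathrm{d}t)=\langle l,\mu\rangle$ is weak*-continuous on $ba(\mathcal{D},\mathfrak{D})$ by definition, since $l$ belongs to the predual $L^{\infty}(\mathcal{D},\mathfrak{D})$. Composing the two gives continuity of $f\mapsto\int_{\mathcal{D}}l(t)\,f(X)(\mathrm{d}t)$.

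\textbf{Main obstacle.} The one place that needs genuine care is the closedness of $\overline{Mes}(\mathcal{D},\mathfrak{D};\mathcal{H})$ — specifically, manufacturing from a product-topology limit (a priori only a map on the unit ball) an honest bounded linear map on all of $\mathcal{S}_1(\mathcal{H})$ that still satisfies positivity and the normalization. The key point is that positivity of $f(\rho)$ is, set by set, a condition of the form $\langle\mathbf{1}_{B},f(\rho)\rangle\geq0$ with $\mathbf{1}_{B}\in L^{\infty}(\mathcal{D},\mathfrak{D})$, hence weak*-closed — exactly analogous to (\ref{cp-2}) being preserved along nets in Lemma~\ref{lem:compact}. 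I would also remark that local compactness of $\mathcal{D}$ is not used in any of the above: everything rests on the duality $ba(\mathcal{D},\mathfrak{D})\simeq L^{\infty}(\mathcal{D},\mathfrak{D})^{\ast}$ together with Alaoglu and Tychonoff, and local compactness becomes relevant only if one later wishes to represent subspaces of $ba(\mathcal{D},\mathfrak{D})$ by Radon measures.
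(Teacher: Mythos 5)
Your proposal is correct and follows essentially the same route as the paper: embed $\overline{Mes}\left(\mathcal{D},\mathfrak{D};\mathcal{H}\right)$ in a product of weak*-compact balls of $ba\left(\mathcal{D},\mathfrak{D}\right)\simeq L^{\infty}\left(\mathcal{D},\mathfrak{D}\right)^{\ast}$, use Alaoglu and Tychonoff, check closedness of linearity, positivity and normalization along nets, and get local convexity and continuity of $f\mapsto\int_{\mathcal{D}}l\,\mathrm{d}f\left(X\right)$ from the cylinder base and the coordinate-projection-plus-pairing composition. Your extra care with the norm bound (the constant $C$ and the extension off the ball) and your observation that local compactness of $\mathcal{D}$ is nowhere needed are refinements of, not departures from, the paper's argument, which is equally (in fact more) terse on these points.
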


\section{Quantum Theory: framework}

A quantum experiment $\mathcal{E}=\left(  \mathcal{H},\left\{  \rho_{\theta
};\theta\in\Theta\right\}  \right)  $ consists of Hilbet space $\mathcal{H}$
and the family $\left\{  \rho_{\theta};\theta\in\Theta\right\}  $ of regular
states over $\mathcal{H}$. ( More generally, though we do not use such setting
in this paper, $\mathcal{E}=\left(  \mathfrak{H},\left\{  \omega_{\theta
};\theta\in\Theta\right\}  \right)  $, where $\mathfrak{H}$ is a $C^{\ast}%
$-algebra and $\omega_{\theta}$ is a state over $\mathfrak{H}$.) Also,
$\Theta$ is an arbitrary set.

A \textit{quantum decision space }$\mathcal{H}_{D}$ is a Hilbert space,
and\textit{\ quantum decision rule }$D$ is a member of $Ch\left(
\mathcal{H},\mathcal{H}_{D}\right)  $. \textit{Loss function} $\mathsf{L}$ is
a non-negative function from $\Theta\times\mathcal{B}_{1}\left(
\mathcal{H}_{D}\right)  $ to $%
\mathbb{R}
$, such that%

\[
\mathsf{L}_{\theta}\left(  X\right)  \geq0\,\,\,\left(  X\geq0\right)
\]
and
\begin{equation}
\sup\left\{  \frac{\left\vert \mathsf{L}_{\theta}\left(  X\right)
-\mathsf{L}_{\theta}\left(  Y\right)  \right\vert }{\left\Vert X-Y\right\Vert
_{1}}\,\,;\,\,X,Y\geq0,\,\mathrm{tr}\,X=\mathrm{tr}\,Y\leq1,\,\theta\in
\Theta\right\}  \leq1. \label{|L-L'|}%
\end{equation}
For example:

\begin{enumerate}
\item $\mathsf{L}_{\theta}\left(  X_{\theta}\right)  :=\left\Vert X_{\theta
}-\rho_{0,\theta}\right\Vert _{1}$, with $\theta\rightarrow\rho_{0,\theta}%
\in\mathcal{B}_{1}\left(  \mathcal{H}_{D}\right)  $ being continuous in
$\left\Vert \cdot\right\Vert _{1}$.

\item Suppose $\mathcal{H}_{D}<\infty$ . Then, for a continuous (in trace
norm) function $\theta\rightarrow\rho_{0,\theta}$ and $\mathsf{L}_{\theta
}\left(  X\right)  :=1-\mathrm{tr}\,\sqrt{\rho_{0,\theta}^{1/2}X\rho
_{0,\theta}^{1/2}}$ satisfies (\ref{|L-L'|}), due to%
\[
1-\mathrm{tr}\,\sqrt{\rho_{0,\theta}^{1/2}X\rho_{0,\theta}^{1/2}}%
\leq\left\Vert \rho_{0,\theta}-X\right\Vert _{1}.
\]

\item $\mathsf{L}_{\theta}\left(  X\right)  =\mathrm{tr}\,L_{\theta}X$, where
$L_{\theta}\in\mathcal{B}_{0}\left(  \mathcal{H}_{D}\right)  $ and
$L:=\left\{  L_{\theta}\right\}  _{\theta\in\Theta}$. If
\begin{equation}
\sup_{\theta\in\Theta}\left\Vert L_{\theta}\right\Vert \leq1,\label{|L|<1}%
\end{equation}
then \ (\ref{|L-L'|})\ \ is true.
\end{enumerate}

Also, one may consider probabilistic quantum decision. If $\left\vert
\mathcal{D}\right\vert <\infty$, we can consider such decision as a CPTP map
$D:\mathcal{B}_{1}\left(  \mathcal{H}\right)  \rightarrow\mathcal{D}%
\times\mathcal{B}_{1}\left(  \mathcal{H}_{D}\right)  $ :
\[
D:X\rightarrow\bigoplus_{t=1}^{\left\vert \mathcal{D}\right\vert }X_{t},
\]
and a proper loss function would be
\[
\mathsf{L}_{\theta}\left(  D\left(  X_{\theta}\right)  \right)  =\frac{1}%
{2}\sum_{t\in\mathcal{D}}\mathrm{tr}\,X_{\theta,t}\left\Vert \frac
{1}{\mathrm{tr}\,X_{\theta,t}}X_{\theta,t}-\rho_{0,\theta,t}\right\Vert _{1}.
\]
It is easy to see, by triangle inequality,
\begin{align*}
\left\vert \mathsf{L}_{\theta}\left(  D\left(  X_{\theta}\right)  \right)
-\mathsf{L}_{\theta}\left(  D^{\prime}\left(  X_{\theta}\right)  \right)
\right\vert  &  \leq\frac{1}{2}\sum_{t\in\mathcal{D}}\left\Vert X_{\theta
,t}-X_{\theta,t}^{\prime}-\left(  \mathrm{tr}\,X_{\theta,t}-\mathrm{tr}%
\,X_{\theta,t}^{\prime}\right)  \rho_{0,\theta,t}\right\Vert _{1}\\
&  \leq\frac{1}{2}\sum_{t\in\mathcal{D}}\left(  \left\Vert p_{\theta
,t}X_{\theta,t}-p_{\theta,t}^{\prime}X_{\theta,t}^{\prime}\right\Vert
_{1}+\left\vert \mathrm{tr}\,X_{\theta,t}-\mathrm{tr}\,X_{\theta,t}^{\prime
}\right\vert \right) \\
&  \leq\left\Vert D\left(  X_{\theta}\right)  -D^{\prime}\left(  X_{\theta
}\right)  \right\Vert _{1}.
\end{align*}
Hence, this case is also satisfies (\ref{|L-L'|}).

A quantum experiment $\mathcal{E}$ is said to be q-$e$-deficient relative to
$\mathcal{F=}\left(  \mathcal{K},\left\{  \sigma_{\theta};\theta\in
\Theta\right\}  \right)  $ for $k$-decision problems (denoted by
$\mathcal{E\geq}_{e,k}^{q}\mathcal{F\,}$), if and only if, for any if and only
if, for for $\mathcal{H}_{D}$ with $\dim\mathcal{H}_{D}=k$, any loss function
$\mathsf{L}$ with (\ref{|L-L'|}), any decision $D$ on the experiment
$\mathcal{F}$,
\begin{equation}
\inf_{D^{\prime}\in Ch\left(  \mathcal{H},\mathcal{H}_{D}\right)  }%
\sup_{\theta\in\Theta}\left\{  \mathsf{L}_{\theta}\left(  D^{\prime}\left(
\rho_{\theta}\right)  \right)  -\mathsf{L}_{\theta}\left(  D\left(
\sigma_{\theta}\right)  \right)  -e_{\theta}\right\}  \leq0. \label{q-risk}%
\end{equation}
When (\ref{q-risk}) is true for $k=\infty$, we say $\mathcal{E}$ is
q-$e$-deficient relative to $\mathcal{F}$ , and denote this situation by
$\mathcal{E}\geq_{e}^{q}\mathcal{F}$. (Thus, $\mathcal{E}\geq_{e,\infty}%
^{q}\mathcal{F}$ means $\mathcal{E}\geq_{e}^{q}\mathcal{F}$).

Also, we can consider tasks with classical outputs. Different from purely
quantum setting, decision space is a measurable space $\left(  \mathcal{D}%
,\mathfrak{D}\right)  $, and loss function $t\rightarrow l_{\theta}\left(
t\right)  $ is a measurable function taking values in $\left[  0,1\right]  $.
Also, decision rule is represented by a POVM $M\in Mes\left(  \mathcal{H}%
,\mathcal{D},\mathfrak{D}\right)  $ in $\mathcal{H}$ over $\left(
\mathcal{D},\mathfrak{D}\right)  $.

$\mathcal{E}$ is said to be c-$e$-deficient relative to $\mathcal{F}$ (denoted
by $\mathcal{E}\geq_{e}^{c}\mathcal{F}$), if and only if, for any loss
function $l$ with $0\leq l_{\theta}\left(  t\right)  \leq1$, for any decision
$M$ on the experiment $\mathcal{F}$,
\begin{equation}
\inf_{M^{\prime}}\int_{\mathcal{D}}l_{\theta}\left(  t\right)  \mathrm{tr}%
\,\rho_{\theta}M^{\prime}\left(  \mathrm{d}t\right)  \leq\int_{\mathcal{D}%
}l_{\theta}\left(  t\right)  \mathrm{tr}\,\sigma_{\theta}M\left(
\mathrm{d}t\right)  +e_{\theta},\,\,\,\forall\theta\in\Theta\text{.}
\label{q-c-risk}%
\end{equation}
c-$e$-deficiency for $k$-decision problems is defined by posing the
restriction $\left\vert \mathcal{D}\right\vert \leq k$ and it is denoted by
$\mathcal{E}\geq_{e,k}^{c}\mathcal{F}$.

q- and c-deficiency is defined in parallel with deficiency and denoted by
$\delta^{q}\left(  \mathcal{E},\mathcal{F}\right)  $ and $\delta^{c}\left(
\mathcal{E},\mathcal{F}\right)  $, respectively. Their $k$-decision versions
$\delta_{k}^{q}\left(  \mathcal{E},\mathcal{F}\right)  $ and $\delta_{k}%
^{c}\left(  \mathcal{E},\mathcal{F}\right)  $ are also defined analogously

.

\section{Quantum randomization criterion}

\begin{theorem}
\label{th:Fan-minimax}(Fan's minimax theorem, \cite{BorweinZhu} ) Suppose that
$\mathcal{X}$ be a compact convex subset of vector space, and $\mathcal{Y}$ be
a convex subset of a vector space. \ Assume that $f:\mathcal{X}\times
\mathcal{Y}\rightarrow%
\mathbb{R}
$ satisfies following conditions: (1) $x\rightarrow f\left(  x,y\right)  $ is
lower semicontinuous and convex on $\mathcal{X}$ for every $y\in\mathcal{Y}$:
(2) $y\rightarrow f\left(  x,y\right)  $ is concave on $\mathcal{Y}$ for every
$x\in\mathcal{X}$. Then%
\[
\min_{x\in\mathcal{X}}\sup_{y\in\mathcal{Y}}f\left(  x,y\right)  =\sup
_{y\in\mathcal{Y}}\min_{x\in\mathcal{X}}f\left(  x,y\right)  .
\]

\end{theorem}

\begin{theorem}
\label{th:quantum-random} $\mathcal{E}$ is q-$e$-deficient relative to
$\mathcal{F}$ for $k$-decision problems if and only if each of the following
four holds (below, $\dim\mathcal{H}_{D}=k$);

\begin{description}
\item[(i)] For any finite subset $\Theta_{0}\subset\Theta$ , for any family
$\left\{  \mathsf{L}_{\theta}\right\}  _{\theta\in\Theta_{0}}$ with
(\ref{|L-L'|}), and for any $D\in Ch\left(  \mathcal{K},\mathcal{H}%
_{D}\right)  $ there exists a $D^{\prime}\in Ch\left(  \mathcal{H}%
,\mathcal{H}_{D}\right)  $,
\[
\inf_{D^{\prime}\in Ch\left(  \mathcal{H},\mathcal{H}_{D}\right)  }%
\sup_{\theta\in\Theta_{0}}\left\{  \mathsf{L}_{\theta}\left(  D^{\prime
}\left(  \omega_{\theta}\right)  \right)  -\mathsf{L}_{\theta}\left(  D\left(
\sigma_{\theta}\right)  \right)  -e_{\theta}\right\}  \leq0.
\]

\item[(ii)] For any finite subset $\Theta_{0}\subset\Theta$ , for any
$L=\left\{  L_{\theta};L_{\theta}\in\mathcal{B}_{0}\right\}  $ with
(\ref{|L|<1}), any decision $D\in Ch\left(  \mathcal{K},\mathcal{H}%
_{D}\right)  $ on the experiment $\mathcal{F}$,
\[
\inf_{D^{\prime}\in Ch\left(  \mathcal{H},\mathcal{H}_{D}\right)  }%
\sup_{\theta\in\Theta_{0}}\left\{  \mathrm{tr}\,L_{\theta}D^{\prime}\left(
\rho_{\theta}\right)  -\mathrm{tr}\,L_{\theta}D\left(  \sigma_{\theta}\right)
-e_{\theta}\right\}  \leq0.
\]

\item[(iii)] For any $L=\left\{  L_{\theta};L_{\theta}\in\mathcal{B}%
_{0}\right\}  $ with (\ref{|L|<1}), any $k$-decision $D$ on the experiment
$\mathcal{F}$ ,and any $\pi\in\mathcal{P}_{\Theta}$,
\[
\exists D_{\pi}^{\prime}\in Ch\left(  \mathcal{H},\mathcal{H}_{D}\right)
\,,\,\,\,\int_{\Theta}\mathrm{tr}\,L_{\theta}D_{\pi}^{\prime}\left(
\rho_{\theta}\right)  \mathrm{d}\pi\leq\int_{\Theta}\left\{  \mathrm{tr}%
\,L_{\theta}D\left(  \sigma_{\theta}\right)  +e_{\theta}\right\}
\mathrm{d}\pi.
\]

\item[(iv)] For any $D$ on the experiment $\mathcal{F}$,
\[
\,\exists D_{0}^{\prime}\in Ch\left(  \mathcal{H},\mathcal{H}_{D}\right)
,\,\,\sup_{\theta\in\Theta}\left\{  \left\Vert D_{0}^{\prime}\left(
\rho_{\theta}\right)  -D\left(  \sigma_{\theta}\right)  \right\Vert
_{1}-e_{\theta}\right\}  \leq0,\,
\]

\end{description}
\end{theorem}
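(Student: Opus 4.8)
The plan is to establish the cycle of implications
\[
\text{q-}e\text{-deficiency} \Rightarrow \text{(i)} \Rightarrow \text{(ii)} \Rightarrow \text{(iii)} \Rightarrow \text{(iv)} \Rightarrow \text{q-}e\text{-deficiency},
\]
so that all four conditions are seen to be equivalent to the original definition. The first implication is essentially immediate: q-$e$-deficiency (formula (\ref{q-risk})) asks that the sup over all of $\Theta$ of the shifted risk difference be $\le 0$, and (i) only asks this for finite subsets $\Theta_0$, so it follows by restriction. The implication (i) $\Rightarrow$ (ii) is also easy, because a loss operator $L_\theta$ (with $L_\theta\ge 0$, $\|L_\theta\|\le 1$, $L_\theta\in\mathcal{LC}(\mathcal{H}_D)$) gives rise via $\mathsf{L}_\theta(X):=\mathrm{tr}\,L_\theta X$ to a loss function satisfying (\ref{|L-L'|}), as recorded in example 3 of the framework section; so (ii) is a special case of (i). Conversely, for (iv) $\Rightarrow$ q-$e$-deficiency one observes that, by the Lipschitz condition (\ref{|L-L'|}), for the $D_0'$ furnished by (iv) and any admissible loss function $\mathsf{L}$ one has
\[
\mathsf{L}_\theta(D_0'(\rho_\theta)) - \mathsf{L}_\theta(D(\sigma_\theta)) \le \|D_0'(\rho_\theta) - D(\sigma_\theta)\|_1 \le e_\theta
\]
for every $\theta$, which is exactly (\ref{q-risk}); note $D_0'$ does not depend on $\mathsf{L}$, which is what makes this work.

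The implication (ii) $\Rightarrow$ (iii) is where the minimax theorem enters, and I expect it to be the technical heart of the argument. Fix a loss operator $\{L_\theta\}$ and a decision $D\in Ch(\mathcal{K},\mathcal{H}_D)$. Consider the function
\[
f(\pi, D') := \int_\Theta \bigl\{\mathrm{tr}\,L_\theta D'(\rho_\theta) - \mathrm{tr}\,L_\theta D(\sigma_\theta) - e_\theta\bigr\}\,\mathrm{d}\pi
\]
on $\mathcal{P}_{\Theta}\times Ch(\mathcal{H},\mathcal{H}_D)$. By Lemma \ref{lem:compact}, $Ch(\mathcal{H},\mathcal{H}_D)$ is convex and compact in the product topology and, for fixed $X$ and $B$, $\Lambda\mapsto\mathrm{tr}\,\Lambda(X)B$ is continuous; since $L_\theta\in\mathcal{LC}(\mathcal{H}_D)$, each term $D'\mapsto\mathrm{tr}\,L_\theta D'(\rho_\theta)$ is continuous and affine, and a finite convex combination (the integral against the finitely-supported $\pi$) is again continuous and affine in $D'$. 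On the other hand $f$ is affine, hence concave, in $\pi$. Apply Corollary \ref{cor:minimax} with $\mathcal{T}=\mathcal{P}_\Theta$ and $\mathcal{V}=Ch(\mathcal{H},\mathcal{H}_D)$ (noting we want the min over $D'$ inside): condition (ii), which says $\inf_{D'}\sup_{\theta\in\Theta_0}\{\cdots\}\le 0$ for every finite $\Theta_0$, gives $\inf_{D'}\sup_{\pi}f(\pi,D')\le 0$ (taking sup over $\pi$ supported on $\Theta_0$ amounts to taking sup over $\theta\in\Theta_0$, and then letting $\Theta_0$ range over all finite subsets), and the minimax equality then yields $\sup_\pi\inf_{D'}f(\pi,D')\le 0$, which is precisely the assertion (iii) (for each $\pi$ there is a $D'_\pi$ making the integral inequality hold — the infimum is attained, or at worst approached, and with compactness it is attained).

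Finally, (iii) $\Rightarrow$ (iv) is the step that invokes the separation Lemma \ref{lem:separate}. Given $D\in Ch(\mathcal{K},\mathcal{H}_D)$, I would set $M_2 := \{\,\theta\mapsto \|D'(\rho_\theta)-D(\sigma_\theta)\|_1 : D'\in Ch(\mathcal{H},\mathcal{H}_D)\,\}\subset\mathbb{R}^\Theta$ and take $M_1$ to be the singleton $\{\theta\mapsto e_\theta\}$ (or the order filter it generates). Condition (iv) is exactly statement (i) of Lemma \ref{lem:separate} for this pair; so it suffices to verify its hypotheses and statement (ii). Subconvexity of $M_2$ follows from convexity of $Ch(\mathcal{H},\mathcal{H}_D)$ together with convexity of the trace norm (a convex combination of $D'$'s produces a function pointwise $\le$ the corresponding convex combination of norms, which is a member of the same sum-set after padding). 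Closedness of $\alpha(M_2)$ in the product topology uses the compactness of $Ch(\mathcal{H},\mathcal{H}_D)$ from Lemma \ref{lem:compact}: one shows the map $D'\mapsto(\theta\mapsto\|D'(\rho_\theta)-D(\sigma_\theta)\|_1)$ is lower semicontinuous coordinatewise and the image of a compact set is compact, hence closed, and then the upper set $\alpha$ of a compact (closed) set is closed. The remaining point is to rewrite $\|D'(\rho_\theta)-D(\sigma_\theta)\|_1 = \sup_{\|L_\theta\|\le 1, L_\theta\in\mathcal{LC}(\mathcal{H}_D)}\mathrm{tr}\,L_\theta\bigl(D'(\rho_\theta)-D(\sigma_\theta)\bigr)$ (duality between $\mathcal{S}_1$ and $\mathcal{LC}^*\!\simeq$… more precisely the predual pairing, valid since $D'(\rho_\theta)-D(\sigma_\theta)$ is trace class); pushing this through the integral against $\pi$ and choosing $L_\theta$ to vary with $\theta$, condition (iii) translates directly into statement (ii) of Lemma \ref{lem:separate}, namely $\inf_{g\in M_2}\int g\,\mathrm{d}\pi \le \int e\,\mathrm{d}\pi$ for all finitely-supported $\pi$. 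The main obstacle I anticipate is the careful verification that $\alpha(M_2)$ is closed and that the supremum/infimum exchanges (the duality rewriting of the trace norm combined with the minimax argument) are legitimate in the infinite-dimensional, merely-compact-in-product-topology setting; the positivity/normalization constraints and the restriction to $\mathcal{LC}(\mathcal{H}_D)$-valued loss operators are exactly what keep the relevant functionals continuous in the product topology, per Lemma \ref{lem:compact}, so the bookkeeping must be done with that lemma in hand throughout.
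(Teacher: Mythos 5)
Your overall route is the paper's: the cycle (\ref{q-risk}) $\Rightarrow$ (i) $\Rightarrow$ (ii) $\Rightarrow$ (iii) $\Rightarrow$ (iv) $\Rightarrow$ (\ref{q-risk}), with Lemma~\ref{lem:compact}, Corollary~\ref{cor:minimax}, and Lemma~\ref{lem:separate} applied to $M_{2}=\left\{ \theta\mapsto\left\Vert D^{\prime}\left(\rho_{\theta}\right)-D\left(\sigma_{\theta}\right)\right\Vert _{1}\right\}$ and $M_{1}=\left\{ e\right\}$, and the easy implications are handled as in the paper. The genuine problem is where you locate the minimax theorem. For (ii) $\Rightarrow$ (iii) no minimax is needed: for fixed $\pi$ with finite support $\Theta_{0}$, the $\pi$-average is dominated by the supremum over $\Theta_{0}$, so (ii) already gives $\inf_{D^{\prime}}\int\left\{ \cdots\right\} \mathrm{d}\pi\leq0$, and attainment follows from compactness of $Ch\left(\mathcal{H},\mathcal{H}_{D}\right)$ plus continuity (Lemma~\ref{lem:compact}). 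Moreover, your intermediate claim that (ii) yields $\inf_{D^{\prime}}\sup_{\pi}f\left(\pi,D^{\prime}\right)\leq0$ ``by letting $\Theta_{0}$ range over all finite subsets'' is not valid as written: the near-optimal $D^{\prime}$ supplied by (ii) depends on $\Theta_{0}$, so pushing the supremum over all finite subsets inside the infimum needs a separate finite-intersection/compactness argument; fortunately you only need $\sup_{\pi}\inf_{D^{\prime}}f\leq0$, which is immediate.

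The place where Corollary~\ref{cor:minimax} is indispensable is precisely the step you describe as (iii) ``translates directly'' into statement (ii) of Lemma~\ref{lem:separate}. For fixed $\pi$, (iii) says: for every loss-operator family $\left\{ L_{\theta}\right\}$ there exists $D_{\pi}^{\prime}$, \emph{possibly depending on} $\left\{ L_{\theta}\right\}$, with $\int\mathrm{tr}\,L_{\theta}\left(D_{\pi}^{\prime}\left(\rho_{\theta}\right)-D\left(\sigma_{\theta}\right)\right)\mathrm{d}\pi\leq\int e_{\theta}\mathrm{d}\pi$; that is, $\sup_{L}\inf_{D^{\prime}}\leq0$. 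What Lemma~\ref{lem:separate}(ii) requires is $\inf_{D^{\prime}}\int\left\Vert D^{\prime}\left(\rho_{\theta}\right)-D\left(\sigma_{\theta}\right)\right\Vert _{1}\mathrm{d}\pi\leq\int e_{\theta}\mathrm{d}\pi$, which after the trace-norm duality is $\inf_{D^{\prime}}\sup_{L}\leq0$: the quantifiers are in the opposite order, and duality with ``$L_{\theta}$ varying with $\theta$'' cannot reverse them. That reversal is exactly the paper's application of Corollary~\ref{cor:minimax}, with $D^{\prime}$ ranging over the compact convex $Ch\left(\mathcal{H},\mathcal{H}_{D}\right)$ and $\left\{ L_{\theta}\right\}$ over the convex set of loss operators, continuity coming from Lemma~\ref{lem:compact}; you mention ``the minimax argument'' only as a legitimacy caveat, but it is the heart of (iii) $\Rightarrow$ (iv) and must be carried out there. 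A smaller point: your closedness argument for $\alpha\left(M_{2}\right)$ (``lower semicontinuous, image of a compact set is compact, hence closed'') does not work as stated, since a merely lower semicontinuous map need not send compacts to compacts; the paper instead uses compactness to show the pointwise infimum $\inf_{D^{\prime}}\left\Vert D^{\prime}\left(\rho_{\theta}\right)-D\left(\sigma_{\theta}\right)\right\Vert _{1}$ is attained for each $\theta$, so that $\alpha\left(M_{2}\right)$ is cut out by pointwise inequalities and is therefore closed in the product topology.
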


\begin{proof}
Obfiously, (\ref{q-risk})$\Rightarrow$ (i)$\Rightarrow$(ii)$\Rightarrow$
(iii), and (v)$\Rightarrow$(\ref{q-risk}). Hence, we show (iii)$\Rightarrow$(iv).

Observe $\pi\in\mathcal{P}_{\Theta}$ has only finite support, and
$\int\mathrm{d}\pi$ is nothing but sum over a finite subset of $\Theta$. Thus
by Lemma\thinspace\ref{lem:compact}, the map
\[
D^{\prime}\rightarrow\int_{\Theta}\left\{  \mathrm{tr}\,L_{\theta}D^{\prime
}\left(  \rho_{\theta}\right)  -\mathrm{tr}\,L_{\theta}D\left(  \sigma
_{\theta}\right)  -e_{\theta}\right\}  \mathrm{d}\pi
\]
is continuous. Also, $\widetilde{Ch}\left(  \mathcal{H},\mathcal{H}%
_{D}\right)  $ is compact due to Lemma\thinspace\ref{lem:compact}, and
obviously convex. Therefore, by Theorem\thinspace\ref{th:Fan-minimax},
\begin{align*}
&  \sup_{L:\,L_{\theta}\in\mathcal{B}_{0}\left(  \mathcal{K}\right)
,-2\mathbf{1\leq}L_{\theta}\leq0}\inf_{D^{\prime}\in Ch\left(  \mathcal{H}%
,\mathcal{H}_{D}\right)  }\int_{\Theta}\left\{  \mathrm{tr}\,L_{\theta
}D^{\prime}\left(  \rho_{\theta}\right)  -\mathrm{tr}\,L_{\theta}D\left(
\sigma_{\theta}\right)  -e_{\theta}\right\}  \mathrm{d}\pi\\
&  =\sup_{L:\,L_{\theta}^{\prime}\in\mathcal{B}_{0}\left(  \mathcal{K}\right)
,-2\mathbf{1\leq}L_{\theta}\leq0}\inf_{D^{\prime}\in\widetilde{Ch}\left(
\mathcal{H},\mathcal{H}_{D}\right)  }\int_{\Theta}\left\{  \mathrm{tr}%
\,L_{\theta}D^{\prime}\left(  \rho_{\theta}\right)  -\mathrm{tr}\,L_{\theta
}D\left(  \sigma_{\theta}\right)  -e_{\theta}\right\}  \mathrm{d}\pi\\
&  =\min_{D^{\prime}\in\widetilde{Ch}\left(  \mathcal{H},\mathcal{H}%
_{D}\right)  }\sup_{L^{\prime}:\,L_{\theta}^{\prime}\in\mathcal{B}_{0}\left(
\mathcal{K}\right)  ,-2\mathbf{1\leq}L_{\theta}^{\prime}\leq0}\int_{\Theta
}\left\{  \mathrm{tr}\,L_{\theta}D^{\prime}\left(  \rho_{\theta}\right)
-\mathrm{tr}\,L_{\theta}D\left(  \sigma_{\theta}\right)  -e_{\theta}\right\}
\mathrm{d}\pi\\
&  =\min_{D^{\prime}\in\widetilde{Ch}\left(  \mathcal{H},\mathcal{H}%
_{D}\right)  }\int_{\Theta}\left\{  2\mathrm{tr}\,\left[  D^{\prime}\left(
\rho_{\theta}\right)  -D\left(  \sigma_{\theta}\right)  \right]
_{-}-e_{\theta}\right\}  \mathrm{d}\pi\,,
\end{align*}
where $\left[  X\right]  _{+}\geq0$ and $\left[  X\right]  _{-}\geq0$ denotes
the positive and negative the positive part of the Hermitian operator $X$, or
positive operators with $X=\left[  X\right]  _{+}-\left[  X\right]  _{-}$.%

\begin{align*}
&  \sup_{L:\,L_{\theta}\in\mathcal{B}_{0}\left(  \mathcal{K}\right)
,\left\Vert L_{\theta}\right\Vert \leq1}\inf_{D^{\prime}\in Ch\left(
\mathcal{H},\mathcal{H}_{D}\right)  }\int_{\Theta}\left\{  \mathrm{tr}%
\,L_{\theta}D^{\prime}\left(  \rho_{\theta}\right)  -\mathrm{tr}\,L_{\theta
}D\left(  \sigma_{\theta}\right)  -e_{\theta}\right\}  \mathrm{d}\pi\\
&  \geq\sup_{L:\,L_{\theta}\in\mathcal{B}_{0}\left(  \mathcal{K}\right)
,\left\Vert L_{\theta}\right\Vert \leq1}\inf_{D^{\prime}\in\widetilde
{Ch}\left(  \mathcal{H},\mathcal{H}_{D}\right)  }\int_{\Theta}\left\{
\mathrm{tr}\,L_{\theta}D^{\prime}\left(  \rho_{\theta}\right)  -\mathrm{tr}%
\,L_{\theta}D\left(  \sigma_{\theta}\right)  -e_{\theta}\right\}
\mathrm{d}\pi\\
&  =\min_{D^{\prime}\in\widetilde{Ch}\left(  \mathcal{H},\mathcal{H}%
_{D}\right)  }\sup_{L:\,L_{\theta}\in\mathcal{B}_{0}\left(  \mathcal{K}%
\right)  ,\left\Vert L_{\theta}\right\Vert \leq1}\int_{\Theta}\left\{
\mathrm{tr}\,L_{\theta}D^{\prime}\left(  \rho_{\theta}\right)  -\mathrm{tr}%
\,L_{\theta}D\left(  \sigma_{\theta}\right)  -e_{\theta}\right\}
\mathrm{d}\pi\\
&  =\min_{D^{\prime}\in\widetilde{Ch}\left(  \mathcal{H},\mathcal{H}%
_{D}\right)  }\sup_{L:\,L_{\theta}\in\mathcal{B}_{0}\left(  \mathcal{K}%
\right)  ,\left\Vert L_{\theta}\right\Vert \leq1}\int_{\Theta}\,\left\{
||D^{\prime}\left(  \rho_{\theta}\right)  -D\left(  \sigma_{\theta}\right)
||_{1}-e_{\theta}\right\}  \mathrm{d}\pi\\
&  =\min_{D^{\prime}\in\widetilde{Ch}\left(  \mathcal{H},\mathcal{H}%
_{D}\right)  }\int_{\Theta}\left\{  2\mathrm{tr}\,\left[  D^{\prime}\left(
\rho_{\theta}\right)  -D\left(  \sigma_{\theta}\right)  \right]
_{-}-e_{\theta}\right\}  \mathrm{d}\pi\,,
\end{align*}

The map
\[
D^{\prime}\rightarrow2\mathrm{tr}\,\left[  D^{\prime}\left(  \rho_{\theta
}\right)  -D\left(  \sigma_{\theta}\right)  \right]  _{-}=\sup_{L_{\theta}%
\in\mathcal{B}_{0}\left(  \mathcal{K}\right)  ,-2\mathbf{1\leq}L_{\theta}%
\leq0}\left\{  \mathrm{tr}\,L_{\theta}D\left(  \rho_{\theta}\right)
-\mathrm{tr}\,L_{\theta}D\left(  \sigma_{\theta}\right)  \right\}  \in%
\mathbb{R}
\]
is lower semicontinuous and convex, being pointwise supremum of bounded linear
functionals.  Observe also $\mathcal{P}_{\Theta}$ is convex. Therefore, using
Theorem\thinspace\ref{th:Fan-minimax} again,
\begin{align*}
&  \sup_{\pi\in\mathcal{P}_{\Theta}}\min_{D^{\prime}\in\widetilde{Ch}\left(
\mathcal{H},\mathcal{H}_{D}\right)  }\int_{\Theta}\left\{  2\mathrm{tr}%
\,\left[  D^{\prime}\left(  \rho_{\theta}\right)  -D\left(  \sigma_{\theta
}\right)  \right]  _{-}-e_{\theta}\right\}  \mathrm{d}\pi\,\\
&  =\min_{D^{\prime}\in\widetilde{Ch}\left(  \mathcal{H},\mathcal{H}%
_{D}\right)  }\sup_{\pi\in\mathcal{P}_{\Theta}}\int_{\Theta}\left\{
2\mathrm{tr}\,\left[  D^{\prime}\left(  \rho_{\theta}\right)  -D\left(
\sigma_{\theta}\right)  \right]  _{-}-e_{\theta}\right\}  \mathrm{d}\pi\\
&  =\min_{D^{\prime}\in\widetilde{Ch}\left(  \mathcal{H},\mathcal{H}%
_{D}\right)  }\sup_{\theta\in\Theta}\left\{  2\mathrm{tr}\,\left[  D^{\prime
}\left(  \rho_{\theta}\right)  -D\left(  \sigma_{\theta}\right)  \right]
_{-}-e_{\theta}\right\}  .
\end{align*}

For each $D^{\prime}\in\widetilde{Ch}\left(  \mathcal{H},\mathcal{H}%
_{D}\right)  $, $D^{\prime\prime}\in Ch\left(  \mathcal{H},\mathcal{H}%
_{D}\right)  $ defined by
\[
D^{\prime\prime}\left(  \rho\right)  :=D\left(  \rho\right)  +\left(
\mathrm{tr}\,\rho-\mathrm{tr}\,D\left(  \rho\right)  \right)  \tilde{\rho},
\]
where $\tilde{\rho}\geq0$, $\mathrm{tr}\,\tilde{\rho}=1$, always improves
$D^{\prime}$ in the sense that
\begin{align*}
&  2\mathrm{tr}\,\left[  D^{\prime\prime}\left(  \rho_{\theta}\right)
-D\left(  \sigma_{\theta}\right)  \right]  _{-}\\
&  =2\mathrm{tr}\,\left[  D^{\prime}\left(  \rho_{\theta}\right)  +\left(
\mathrm{tr}\,\rho_{\theta}-\mathrm{tr}\,D\left(  \rho_{\theta}\right)
\right)  \tilde{\rho}-D\left(  \sigma_{\theta}\right)  \right]  _{-}\\
&  \leq2\mathrm{tr}\,\left[  D^{\prime}\left(  \rho_{\theta}\right)  -D\left(
\sigma_{\theta}\right)  \right]  _{-}.
\end{align*}
Therefore,
\begin{align*}
&  \min_{D^{\prime}\in\widetilde{Ch}\left(  \mathcal{H},\mathcal{H}%
_{D}\right)  }\sup_{\theta\in\Theta}\left\{  2\mathrm{tr}\,\left[  D^{\prime
}\left(  \rho_{\theta}\right)  -D\left(  \sigma_{\theta}\right)  \right]
_{-}-e_{\theta}\right\}  \\
&  =\min_{D^{\prime}\in Ch\left(  \mathcal{H},\mathcal{H}_{D}\right)  }%
\sup_{\theta\in\Theta}\left\{  2\mathrm{tr}\,\left[  D^{\prime}\left(
\rho_{\theta}\right)  -D\left(  \sigma_{\theta}\right)  \right]
_{-}-e_{\theta}\right\}  \\
&  =\min_{D^{\prime}\in Ch\left(  \mathcal{H},\mathcal{H}_{D}\right)  }%
\sup_{\theta\in\Theta}\left\{  \left\Vert D^{\prime}\left(  \rho_{\theta
}\right)  -D\left(  \sigma_{\theta}\right)  \right\Vert _{1}-e_{\theta
}\right\}  ,
\end{align*}
and we have (iii)$\Rightarrow$(iv).
\end{proof}

Letting $\mathcal{H}_{D}=\mathcal{K}$ and $D=\mathbf{I}$, we obtain:

\begin{theorem}
\label{th:quantum-random-2} $\mathcal{E\geq}_{e}^{q}\mathcal{F}$ $\ $is
equivalent to the existence of a CPTP map $\Lambda$ with%
\begin{equation}
\left\Vert \Lambda\left(  \rho_{\theta}\right)  -\sigma_{\theta}\right\Vert
_{1}\leq e_{\theta},\,\,\forall\theta\in\Theta. \label{randomization}%
\end{equation}

\end{theorem}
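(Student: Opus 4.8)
The plan is to deduce Theorem~\ref{th:quantum-random-2} as the special case $\mathcal{H}_D=\mathcal{K}$, $D=\mathbf{I}$ of Theorem~\ref{th:quantum-random}, specifically of the equivalence with condition (iv). First I would observe that $\mathcal{E}\geq_e^q\mathcal{F}$ is by definition $\mathcal{E}\geq_{e,\infty}^q\mathcal{F}$, i.e. the $k$-decision statement with $k=\infty$, so Theorem~\ref{th:quantum-random} applies with $\mathcal{H}_D$ ranging over all separable Hilbert spaces. Taking $\mathcal{H}_D=\mathcal{K}$ is then a legitimate choice, and taking $D=\mathbf{I}_{\mathcal{K}}\in Ch(\mathcal{K},\mathcal{K})$ (the identity channel, which is manifestly CPTP) is a legitimate decision on $\mathcal{F}$. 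With these choices, condition (iv) applied to this particular $D$ asserts exactly the existence of some $D_0'\in Ch(\mathcal{H},\mathcal{K})$ — that is, a CPTP map $\Lambda:=D_0'$ — with $\sup_{\theta\in\Theta}\{\|\Lambda(\rho_\theta)-\sigma_\theta\|_1-e_\theta\}\leq 0$, which is precisely (\ref{randomization}).

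For the converse direction I would argue that (\ref{randomization}) implies $\mathcal{E}\geq_e^q\mathcal{F}$ directly, or equivalently that it implies condition (iv) for every $D$, hence $\mathcal{E}\geq_{e}^q\mathcal{F}$ by Theorem~\ref{th:quantum-random}. Given a CPTP $\Lambda$ with $\|\Lambda(\rho_\theta)-\sigma_\theta\|_1\leq e_\theta$ and any decision $D\in Ch(\mathcal{K},\mathcal{H}_D)$, set $D_0':=D\circ\Lambda$, which lies in $Ch(\mathcal{H},\mathcal{H}_D)$ since the composition of channels is a channel. Then, because $D$ is trace-preserving and positive, it is a contraction in trace norm (the standard fact that CPTP maps, indeed all trace-preserving positive maps, are $\|\cdot\|_1$-contractive on the self-adjoint part), so
\[
\|D_0'(\rho_\theta)-D(\sigma_\theta)\|_1=\|D(\Lambda(\rho_\theta)-\sigma_\theta)\|_1\leq\|\Lambda(\rho_\theta)-\sigma_\theta\|_1\leq e_\theta
\]
for all $\theta$, which is condition (iv). Since this holds for arbitrary $\mathcal{H}_D$ and $D$, Theorem~\ref{th:quantum-random} gives $\mathcal{E}\geq_e^q\mathcal{F}$.

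The only genuinely substantive point is the trace-norm contractivity of the postprocessing channel $D$ used in the converse; everything else is bookkeeping about which special case of the already-proven Theorem~\ref{th:quantum-random} to invoke. I would either cite this contractivity as a standard fact or give the one-line argument: write the self-adjoint operator $\Lambda(\rho_\theta)-\sigma_\theta=X_+-X_-$ with $X_\pm\geq0$ having orthogonal supports, so $\|\Lambda(\rho_\theta)-\sigma_\theta\|_1=\mathrm{tr}\,X_++\mathrm{tr}\,X_-$, and then $\|D(X_+-X_-)\|_1\leq\|D(X_+)\|_1+\|D(X_-)\|_1=\mathrm{tr}\,D(X_+)+\mathrm{tr}\,D(X_-)=\mathrm{tr}\,X_++\mathrm{tr}\,X_-$, using positivity and trace-preservation of $D$. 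No other obstacle is expected; the forward direction is immediate once one notes $D=\mathbf{I}$ is an admissible decision on $\mathcal{F}$.
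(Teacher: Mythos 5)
Your proposal is correct and follows essentially the same route as the paper, which obtains Theorem~\ref{th:quantum-random-2} precisely by specializing Theorem~\ref{th:quantum-random} (condition (iv)) to $\mathcal{H}_D=\mathcal{K}$ and $D=\mathbf{I}$. The converse you spell out via $D_0'=D\circ\Lambda$ and trace-norm contractivity is exactly the argument the paper leaves implicit (it is the trivial direction (iv)$\Rightarrow$(\ref{q-risk}), using the Lipschitz condition (\ref{|L-L'|}) on loss functions), so there is nothing genuinely different here.
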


\bigskip

\begin{corollary}
$\mathcal{E}$ is q-$0$-deficient relative to $\mathcal{F}$, if and only if
\begin{equation}
\forall\left(  \theta_{1},\theta_{2},\cdots,\theta_{n}\right)  \in\Theta
^{n},\,\,F\left(  \rho_{\theta_{1}},\rho_{\theta_{2}},\cdots\rho_{\theta_{n}%
}\right)  \leq F\left(  \sigma_{\theta_{1}},\sigma_{\theta_{2}},\cdots
\sigma_{\theta_{n}}\right)  , \label{F<F}%
\end{equation}
holds for any finite number $n$ and any $n$-point functionals $F$ such that
$F$ is monotone increasing by CPTP map and
\begin{align}
&  \left\vert F\left(  X_{1},X_{2},\cdots,X_{k}\right)  -F\left(  Y_{1}%
,Y_{2},\cdots,Y_{k}\right)  \right\vert \nonumber\\
&  \leq f\left(  \left\Vert X_{1}-Y_{1}\right\Vert _{1},\left\Vert X_{2}%
-Y_{2}\right\Vert _{1},\cdots,\left\Vert X_{k}-Y_{k}\right\Vert _{1}\right)
\label{D-D<f}%
\end{align}
holds for any $X_{j},Y_{j}\in\mathcal{S}\left(  \mathcal{H}\right)  $
($j=1,2,\cdots,k$), with $f$ being continuous and
\begin{equation}
f\left(  0,0,\cdots,0\right)  =0. \label{f=0}%
\end{equation}

\end{corollary}

\begin{proof}
If $\mathcal{E}$ is q-$0$-deficient relative to $\mathcal{F}$, by
Theorem\thinspace\ref{th:quantum-random-2}, we have (\ref{F<F}). On the other
hand, if (\ref{F<F}) holds, then, for any loss operator $L_{\theta}$ over
$\mathcal{K}$ with (\ref{|L|<1}),
\[
\,\inf_{D\in Ch\left(  \mathcal{H},\mathcal{K}\right)  }\int_{\Theta
}\mathrm{tr}\,L_{\theta}D\left(  \rho_{\theta}\right)  \mathrm{d}\pi\leq
\inf_{D\in Ch\left(  \mathcal{K},\mathcal{K}\right)  }\int_{\Theta}%
\mathrm{tr}\,L_{\theta}D\left(  \sigma_{\theta}\right)  \mathrm{d}\pi.
\]
Therefore, by Theorem\thinspace\ref{th:quantum-random}, we have
$\mathcal{E\geq}_{0}^{q}\mathcal{F}$.
\end{proof}

\section{Classical decision space}

\begin{lemma}
\label{lem:projection}(\cite{Strasser}, Theorem 41.7)There is a positive
linear oparator $T:$ $ba\left(  \mathcal{D},\mathfrak{D}\right)  \rightarrow
ca\left(  \mathcal{D},\mathfrak{D}\right)  $ such that

\begin{description}
\item[(i)] $\left\Vert T\right\Vert =1$,

\item[(ii)] $T\left(  \mu\right)  \left(  \mathcal{D}\right)  =\mu\left(
\mathcal{D}\right)  $, if $\mu\geq0.$

\item[(iii)] $\left.  T\right\vert _{ca\left(  \mathcal{D},\mathfrak{D}%
\right)  }=\left.  \mathrm{id}\right\vert _{ca\left(  \mathcal{D}%
,\mathfrak{D}\right)  }\,.$
\end{description}
\end{lemma}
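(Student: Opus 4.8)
\textbf{Proof plan for Lemma~\ref{lem:projection}-based results / the final statement.}

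The final boxed statement is Lemma~\ref{lem:projection}, which is quoted directly from Strasser, Theorem~41.7. Since the paper cites it rather than reproving it, a self-contained proof proposal should reconstruct the standard argument, which rests on the fact that $ba\left(\mathcal{D},\mathfrak{D}\right) \simeq L^{\infty}\left(\mathcal{D},\mathfrak{D}\right)^{\ast}$ and that $ca\left(\mathcal{D},\mathfrak{D}\right)$ sits inside $ba\left(\mathcal{D},\mathfrak{D}\right)$ as the band of countably additive measures. The plan is to invoke a Yosida--Hewitt-type band decomposition: every bounded finitely additive measure $\mu$ splits uniquely as $\mu = \mu_{c} + \mu_{p}$, where $\mu_{c} \in ca$ is its countably additive part and $\mu_{p}$ is purely finitely additive, and this decomposition is a projection onto the band $ca$ that is order-preserving and norm-nonincreasing. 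Setting $T(\mu) := \mu_{c}$ is the natural candidate.

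First I would recall the Yosida--Hewitt theorem: $ba\left(\mathcal{D},\mathfrak{D}\right)$ is the direct sum of the band $ca\left(\mathcal{D},\mathfrak{D}\right)$ and its complementary band of purely finitely additive measures, and the band projection $T$ onto $ca$ is positive and linear. Property (iii), $\left. T\right\vert_{ca} = \left. \mathrm{id}\right\vert_{ca}$, is then immediate because a countably additive measure equals its own countably additive part. Next I would verify (ii): if $\mu \geq 0$ then $\mu_{p} \geq 0$ as well (band projections preserve positivity on both summands), and a purely finitely additive positive measure $\mu_{p}$ has the property that for every $\varepsilon > 0$ there is a set of $\mu_{p}$-measure arbitrarily close to $\mu_{p}\left(\mathcal{D}\right)$ on which... rather, the cleaner route is: purely finitely additive nonnegative $\mu_{p}$ satisfies $\inf\{\mu_{p}(A) : A \in \mathfrak{D}, \mu_{p}(\mathcal{D}\setminus A) = 0\}$ arguments; in fact the standard fact is simply that a nonnegative purely finitely additive measure can be ``pushed to the boundary,'' but the total-mass identity $T(\mu)(\mathcal{D}) = \mu(\mathcal{D})$ for $\mu \geq 0$ is most directly obtained from the fact that $\mu_{p}$, being purely finitely additive, can be dominated below... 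Actually the honest statement I would use: for $\mu \geq 0$ purely finitely additive, and any countable partition, $\sum_n \mu_p(A_n)$ can be made $0$ by choosing the partition finely, so $\mu_p$ contributes no ``atomic countably additive mass,'' and one shows $\mu_c(\mathcal{D}) = \mu(\mathcal{D})$ by exhibiting, for each $\varepsilon$, a countably additive $\nu \leq \mu$ with $\nu(\mathcal{D}) \geq \mu(\mathcal{D}) - \varepsilon$; the supremum of such $\nu$ is exactly $\mu_c$ and has full mass. Finally, (i) follows from (ii) applied to the Jordan decomposition: for general $\mu$, $\left\Vert T(\mu)\right\Vert = \left\Vert \mu_c^+ - \mu_c^-\right\Vert \leq \mu_c^+(\mathcal{D}) + \mu_c^-(\mathcal{D}) \leq \mu^+(\mathcal{D}) + \mu^-(\mathcal{D}) = \left\Vert\mu\right\Vert$, while testing on any $\mu \in ca$ shows $\left\Vert T\right\Vert \geq 1$, so $\left\Vert T\right\Vert = 1$.

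The main obstacle is establishing the total-mass preservation (ii), i.e.\ that the purely finitely additive part of a nonnegative measure carries ``no countably additive mass'' in the precise sense that subtracting it does not shrink the total mass of the countably additive part. This is exactly the content of the Yosida--Hewitt decomposition and is the nontrivial functional-analytic input; everything else (linearity, positivity, the restriction property, and the norm computation) is formal once the band projection is in hand. For the purposes of this paper it is legitimate to cite \cite{Strasser} Theorem~41.7 for the whole statement, as the authors do, since a full proof of Yosida--Hewitt would be a digression; the sketch above indicates how one would fill it in if desired.
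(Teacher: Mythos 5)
The paper does not prove this lemma at all: it is quoted from Strasser (Theorem~41.7), so the relevant comparison is with the standard construction behind that citation. Your proposal has a genuine gap at precisely the step you yourself single out as the main obstacle: the Yosida--Hewitt band projection onto $ca\left(\mathcal{D},\mathfrak{D}\right)$ does \emph{not} satisfy (ii). If $\mu\geq 0$ is purely finitely additive, its countably additive part is zero while its total mass need not be. Concretely, take $\mathcal{D}=\mathbb{N}$ with the discrete topology (locally compact, Baire $\sigma$-field $\mathfrak{D}=2^{\mathbb{N}}$) and let $\mu$ be the $0$--$1$ valued finitely additive measure induced by a free ultrafilter: then $\mu\left(\mathcal{D}\right)=1$, but the only countably additive $\nu$ with $0\leq\nu\leq\mu$ is $\nu=0$. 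Hence your asserted approximation step (``for each $\varepsilon$ there is a countably additive $\nu\leq\mu$ with $\nu\left(\mathcal{D}\right)\geq\mu\left(\mathcal{D}\right)-\varepsilon$, and the supremum of such $\nu$ is $\mu_{c}$ with full mass'') is false, and the band projection gives $T\left(\mu\right)\left(\mathcal{D}\right)=0\neq\mu\left(\mathcal{D}\right)$. The Yosida--Hewitt projection does deliver linearity, positivity, (iii), and $\left\Vert T\right\Vert\leq 1$, but no purely order-/measure-theoretic projection onto $ca$ can deliver (ii), because by construction it annihilates the purely finitely additive mass instead of relocating it.

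The construction that actually works, and the reason the lemma carries topological hypotheses on $\mathcal{D}$ (cf.\ the paper's ``Technicalities'' subsection on regularity and the Baire $\sigma$-field), must use the topology rather than the lattice structure alone. Roughly: identify $ba\left(\mathcal{D},\mathfrak{D}\right)\simeq L^{\infty}\left(\mathcal{D},\mathfrak{D}\right)^{\ast}$, restrict the corresponding functional to the bounded continuous functions (these are $\mathfrak{D}$-measurable since $\mathfrak{D}$ is the Baire $\sigma$-field), and represent the restricted positive functional by a regular Baire measure via the Riesz representation theorem; regularity then forces countable additivity (this is the Dunford--Schwartz Theorem~III.5.13 fact recalled in the paper). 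The resulting $T$ is positive and linear, has norm one, fixes $ca\left(\mathcal{D},\mathfrak{D}\right)$ because a countably additive Baire measure is determined by its integrals against continuous functions, and preserves total mass of nonnegative $\mu$ because the constant function $1$ is continuous. Note that this $T$ genuinely moves mass: in the ultrafilter example it sends $\mu$ to a point mass at the corresponding limit point (in a compactification), whereas your projection would discard that mass entirely. So the approach needs to be replaced at its core, not merely patched.
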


\begin{theorem}
\label{th:c-defficient} $\mathcal{E\geq}_{e}^{c}\mathcal{F}$ if and only if
one of the following two holds:

\begin{description}
\item[(i)] For any decision space $\left(  \mathcal{D},\mathfrak{D}\right)  $,
for any measurable loss function $l$ with $-1\leq l_{\theta}\left(  t\right)
\leq1$, for any decision $M$ on the experiment $\mathcal{F}$, and for any
$\pi\in\mathcal{P}_{\Theta}$, there is some decision $M^{\prime}$ on the
experiment $\mathcal{E}$ such that
\[
\int_{\Theta}\int_{t\in\mathcal{D}}l_{\theta}\left(  t\right)  \mathrm{tr}%
\,\rho_{\theta}M^{\prime}\left(  \mathrm{d}t\right)  \mathrm{d}\pi\leq
\int_{\Theta}\left\{  \int_{t\in\mathcal{D}}l_{\theta}\left(  t\right)
\mathrm{tr}\,\sigma_{\theta}M\left(  \mathrm{d}t\right)  +e_{\theta}\right\}
\mathrm{d}\pi.
\]

\item[(ii)] For any decision space $\left(  \mathcal{D},\mathfrak{D}\right)
$, any decision $M$ on the experiment $\mathcal{F}$, there is some decision
$M^{\prime}$ on the experiment $\mathcal{E}$ such that
\[
\sup_{\theta\in\Theta}\left\{  \left\Vert f_{M^{\prime}}\left(  \rho_{\theta
}\right)  -f_{M}\left(  \sigma_{\theta}\right)  \right\Vert _{1}-e_{\theta
}\right\}  \leq0,
\]
where $f_{M}\left(  \rho\right)  $ is as of (\ref{def:f}).
\end{description}
\end{theorem}

\begin{proof}
(\ref{q-c-risk})$\Rightarrow$(i), (ii)$\Rightarrow$(\ref{q-c-risk}) is
trivial. Hence, we have to show (i)$\Rightarrow$(ii). Suppose (i) holds true.
Extension of the domain of $M^{\prime}$ to $\widetilde{Mes}\left(
\mathcal{D},\mathfrak{D};\mathcal{H}\right)  $ only decreases the risk. Thus,
using the argument parallel to the proof of (iii)$\Rightarrow$(iv) of
Theorem\thinspace\ref{th:quantum-random}, we have, for any $M\in Mes\left(
\mathcal{D},\mathfrak{D;}\mathcal{K}\right)  $,
\[
\exists f_{0}\in\overline{Mes}\left(  \mathcal{D},\mathfrak{D};\mathcal{H}%
\right)  ,\text{\thinspace\thinspace}\sup_{\theta\in\Theta}\left\{  \left\Vert
f_{0}\left(  \rho_{\theta}\right)  -f_{M}\left(  \sigma_{\theta}\right)
\right\Vert _{1}-e_{\theta}\right\}  \leq0,
\]
where Lemma\thinspace\ref{lem:c-compact}\ is used instead of Lemma\thinspace
\ref{lem:compact}.

Let $T$ be as of Lemma\thinspace\ref{lem:projection},
\begin{align*}
&  \sup_{\theta\in\Theta}\left\{  \left\Vert T\circ f_{0}\left(  \rho_{\theta
}\right)  -f_{M}\left(  \sigma_{\theta}\right)  \right\Vert _{1}-e_{\theta
}\right\} \\
&  =\sup_{\theta\in\Theta}\left\{  \left\Vert T\circ f_{0}\left(  \rho
_{\theta}\right)  -T\circ f_{M}\left(  \sigma_{\theta}\right)  \right\Vert
_{1}-e_{\theta}\right\} \\
&  \leq\sup_{\theta\in\Theta}\left\{  \left\Vert f_{0}\left(  \rho_{\theta
}\right)  -f_{M}\left(  \sigma_{\theta}\right)  \right\Vert _{1}-e_{\theta
}\right\}  \leq0.
\end{align*}
Therefore, $f_{0}^{\prime}:=T\circ f_{0}$, which is a bounded linear map from
$\mathcal{B}_{1}\left(  \mathcal{H}\right)  $ to $ca\left(  \mathcal{D}%
,\mathfrak{D}\right)  $, satisfies $f_{0}^{\prime}\geq0$, and
\[
f_{0}^{\prime}\left(  X\right)  \left(  \mathcal{D}\right)  =T\left(
f_{0}\left(  X\right)  \right)  \left(  \mathcal{D}\right)  =f_{0}\left(
X\right)  \left(  \mathcal{D}\right)  =\mathrm{tr}\,X\,,\,\forall
X\in\mathcal{B}_{1}\left(  \mathcal{H}\right)  .
\]
Thus, there is a POVM $M^{\prime}$ such that $f_{M^{\prime}}=f_{0}^{\prime}$ .
Thus, and (ii) is proved.
\end{proof}

Due to (ii) of Theorem\thinspace\ref{th:c-defficient}, we have:

\begin{theorem}
Suppose $\mathcal{E\geq}_{0}^{c}\mathcal{F}$. Then, we have the following (i)-(iii).

\begin{description}
\item[(i)] Let $l_{\theta}\left(  t\right)  $ an arbitrary classical loss
function which is not necessarily bounded. Then, for any decision $M$ on a
decision space $\left(  \mathcal{D},\mathfrak{D}\right)  $, there is decision
$M^{\prime}$on $\left(  \mathcal{D},\mathfrak{D}\right)  $ such that
\[
\int_{t\in\mathcal{D}}l_{\theta}\left(  t\right)  \mathrm{tr}\,\rho_{\theta
}M^{\prime}\left(  \mathrm{d}t\right)  \leq\int_{t\in\mathcal{D}}l_{\theta
}\left(  t\right)  \mathrm{tr}\,\sigma_{\theta}M\left(  \mathrm{d}t\right)
,\,\,\forall\theta\in\Theta\text{.}%
\]

\item[(ii)] Let $l_{\theta}\left(  t\right)  $ an arbitrary classical loss
function which is not necessarily bounded. Then for any decision $M$ on an
arbitrary decision space $\left(  \mathcal{D},\mathfrak{D}\right)  $, where
$\mathcal{\Theta}\subset\mathcal{D\subset%
\mathbb{R}
}^{m}$, and
\[
\int_{t\in\mathcal{D}}t\,\mathrm{tr}\,\sigma_{\theta}M\left(  \mathrm{d}%
t\right)  =\theta,
\]
there is decision $M^{\prime}$ on $\left(  \mathcal{D},\mathfrak{D}\right)  $
\ with
\[
\int_{t\in\mathcal{D}}t\,\mathrm{tr}\,\rho_{\theta}M^{\prime}\left(
\mathrm{d}t\right)  =\theta
\]
such that
\[
\int_{t\in\mathcal{D}}l_{\theta}\left(  t\right)  \mathrm{tr}\,\rho_{\theta
}M^{\prime}\left(  \mathrm{d}t\right)  \leq\int_{t\in\mathcal{D}}l_{\theta
}\left(  t\right)  \mathrm{tr}\,\sigma_{\theta}M\left(  \mathrm{d}t\right)
,\,\,\forall\theta\in\Theta\text{.}%
\]

\item[(III)] Let $\mathcal{D}=\left\{  0,1\right\}  $, $\mathfrak{D}%
=2^{\mathcal{D}}$ and $\Theta_{0}\cup\Theta_{1}\subset\Theta$. Then, for any
decision $M$ on $\left(  \mathcal{D},\mathfrak{D}\right)  $ such that
\[
\mathrm{tr}\,\sigma_{0}M\left(  \{1\}\right)  \leq\alpha,\,
\]
there is a decision $M^{\prime}$ on $\left(  \mathcal{D},\mathfrak{D}\right)
$ with
\[
\mathrm{tr}\,\rho_{0}M^{\prime}\left(  \{1\}\right)  \leq\alpha
\]
such that
\[
\mathrm{tr}\,\rho_{1}M^{\prime}\left(  \{0\}\right)  \leq\mathrm{tr}%
\,\sigma_{1}M\left(  \{0\}\right)  .
\]

\end{description}
\end{theorem}

Using almost parallel argument as the proof of Theorem\thinspace
\ref{th:c-defficient}, we have:

\begin{theorem}
\label{th:c-defficient-2} $\mathcal{E\geq}_{e,k}^{c}\mathcal{F}$ if and only
if one of the following two holds:

\begin{description}
\item[(i)] With $\left\vert \mathcal{D}\right\vert =k$ , for any measurable
loss function $l$ with $-1\leq l_{\theta}\left(  t\right)  \leq1$, for any
$k-$decision $M$ on the experiment $\mathcal{F}$, and for any $\pi
\in\mathcal{P}_{\Theta}$, there is some $k$-decision $M^{\prime}$ on the
experiment $\mathcal{E}$ such that
\[
\int_{\Theta}\sum_{t\in\mathcal{D}}l_{\theta}\left(  t\right)  \mathrm{tr}%
\,\rho_{\theta}M^{\prime}\left(  t\right)  \mathrm{d}\pi\leq\int_{\Theta
}\left\{  \sum_{t\in\mathcal{D}}l_{\theta}\left(  t\right)  \mathrm{tr}%
\,\sigma_{\theta}M\left(  t\right)  +e_{\theta}\right\}  \mathrm{d}\pi.
\]

\item[(ii)] With $\left\vert \mathcal{D}\right\vert =k$ , any $k$-decision $M$
on the experiment $\mathcal{F}$, there is some $k$-decision $M^{\prime}$ on
the experiment $\mathcal{E}$ such that
\[
\sup_{\theta\in\Theta}\left\{  \left\Vert f_{M^{\prime}}\left(  \rho_{\theta
}\right)  -f_{M}\left(  \sigma_{\theta}\right)  \right\Vert _{1}-e_{\theta
}\right\}  \leq0,\,.
\]

\end{description}
\end{theorem}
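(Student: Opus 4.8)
The plan is to mimic, almost verbatim, the proof of Theorem \ref{th:c-defficient}, adjusting only for the finiteness constraint $\left\vert \mathcal{D}\right\vert = k$. The trivial implications $(\ref{q-c-risk})\Rightarrow(\text{i})$ and $(\text{ii})\Rightarrow(\ref{q-c-risk})$ go through exactly as before: restricting all decision spaces to have cardinality $k$ does not affect these directions, since $(\text{i})$ is obtained by integrating the defining inequality against $\pi$, and $(\text{ii})$ gives the pointwise bound by taking $l_\theta$ to range over indicators of measurable sets (there are only finitely many such sets when $\left\vert\mathcal{D}\right\vert = k$, which actually simplifies matters). So the work is again the implication $(\text{i})\Rightarrow(\text{ii})$.

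For $(\text{i})\Rightarrow(\text{ii})$, fix a $k$-decision $M$ on $\mathcal{F}$, and consider the relevant restricted version of $\overline{Mes}(\mathcal{D},\mathfrak{D};\mathcal{H})$ with $\left\vert\mathcal{D}\right\vert = k$; here $ba(\mathcal{D},\mathfrak{D}) = ca(\mathcal{D},\mathfrak{D}) = \mathbb{R}^k$, so $\overline{Mes}$ is already a set of genuine measures and Lemma \ref{lem:c-compact} still applies (with $\mathcal{D}$ finite, hence trivially locally compact), giving a compact convex set of decisions on which the risk functionals are continuous in the product topology. Then I run the minimax argument of the proof of Theorem \ref{th:quantum-random}, parts (iii)$\Rightarrow$(iv): for each $\pi\in\mathcal{P}_\Theta$ one applies Corollary \ref{cor:minimax} with $\mathcal{T}$ the (convex) set of loss operators / functions $l$ with $0\le l_\theta(t)\le 1$ and $\mathcal{V}$ the compact set of $k$-decisions $M'$ on $\mathcal{E}$, to interchange sup and inf and recognize $\sup_l \sum_t l_\theta(t)\,\mathrm{tr}\,(\rho_\theta - \sigma_\theta)(\cdots)$ as the relevant $\ell_1$-distance $\left\Vert f_{M'}(\rho_\theta) - f_M(\sigma_\theta)\right\Vert_1$. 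Setting $M_2 := \{g : g(\theta) = \left\Vert f_{M'}(\rho_\theta) - f_M(\sigma_\theta)\right\Vert_1,\ M' \text{ a } k\text{-decision on } \mathcal{E}\}$, one checks $\alpha(M_2)$ is convex and (using the compactness to realize the pointwise infimum) closed in the product topology, then applies Lemma \ref{lem:separate} with $M_1 = \{e\}$ to extract a single $M'$ with $\left\Vert f_{M'}(\rho_\theta) - f_M(\sigma_\theta)\right\Vert_1 \le e_\theta$ for all $\theta$.

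One simplification relative to Theorem \ref{th:c-defficient}: because $\mathcal{D}$ is already finite, $\overline{Mes}$ consists of countably-additive measures, so the projection $T$ of Lemma \ref{lem:projection} is just the identity and that step can be dropped entirely — the $f_0$ produced by the separation argument is already of the form $f_{M'}$ for an honest $k$-decision $M'$. I would make this remark explicitly rather than re-invoking Lemma \ref{lem:projection}.

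The main obstacle I anticipate is bookkeeping rather than conceptual: one must be careful that the convexity of the set of $k$-decisions $M'$ (POVMs on a fixed $k$-point space) is preserved — it is, since a convex combination of POVMs is a POVM with the same finite outcome set — and that all the continuity/compactness hypotheses of Corollary \ref{cor:minimax} and Lemma \ref{lem:separate} are genuinely met in the $k$-restricted setting, in particular that $\alpha(M_2)$ is closed, which is where compactness of the decision set (via Lemma \ref{lem:c-compact}) does the real work. Since $\mathcal{D}$ finite makes every topological subtlety easier, I expect no genuine difficulty, and the phrase "using almost parallel argument" in the statement is accurate; the proof is essentially a one-paragraph reduction to the already-established machinery.
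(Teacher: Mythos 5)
Your proposal is correct and is exactly the approach the paper intends: the paper offers no separate proof of this theorem, only the remark that it follows by an "almost parallel argument" to Theorem~\ref{th:c-defficient}, i.e.\ the same minimax/separation machinery (Lemmas~\ref{lem:c-compact} and \ref{lem:separate}, Corollary~\ref{cor:minimax}) restricted to a $k$-point decision space, which is what you carry out. Your additional observation that Lemma~\ref{lem:projection} can be dropped because $ba\left(\mathcal{D},\mathfrak{D}\right)=ca\left(\mathcal{D},\mathfrak{D}\right)\simeq\mathbb{R}^{k}$ for finite $\mathcal{D}$, so the map produced by the separation argument is already of the form $f_{M^{\prime}}$ for an honest $k$-outcome POVM, is a correct minor simplification.
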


\bigskip

In case $k=2$, c-$e$-deficiency has more explicite expression. Since
\[
\sum_{t\in\left\{  0,1\right\}  }l_{\theta}\left(  t\right)  \mathrm{tr}%
\,\rho_{\theta}M^{\prime}\left(  t\right)  =\left\{  \left(  l_{\theta}\left(
0\right)  -l_{\theta}\left(  1\right)  \right)  \mathrm{tr}\,\rho_{\theta
}M^{\prime}\left(  0\right)  +l_{\theta}\left(  1\right)  \right\}  ,
\]
we have, letting $\pi\in\mathcal{P}_{\Theta}$,
\begin{align*}
&  \int_{\Theta}\left\{  \sum_{t\in\mathcal{D}}l_{\theta}\left(  t\right)
\left(  \mathrm{tr}\,\rho_{\theta}M^{\prime}\left(  t\right)  -\mathrm{tr}%
\,\sigma_{\theta}M\left(  t\right)  \right)  -e_{\theta}\right\}
\mathrm{d}\pi\\
&  =\int_{\Theta}\left\{  \left(  l_{\theta}\left(  0\right)  -l_{\theta
}\left(  1\right)  \right)  \left(  \mathrm{tr}\,\rho_{\theta}M^{\prime
}\left(  0\right)  -\mathrm{tr}\,\sigma_{\theta}M\left(  0\right)  \right)
-e_{\theta}\right\}  \mathrm{d}\pi
\end{align*}
Therefore, letting $a_{\theta}:=\frac{1}{2}\left(  l_{\theta}\left(  0\right)
-l_{\theta}\left(  1\right)  \right)  $,
\begin{align*}
&  \inf_{M^{\prime}}\int_{\Theta}\left\{  \sum_{t\in\mathcal{D}}l_{\theta
}\left(  t\right)  \left(  \mathrm{tr}\,\rho_{\theta}M^{\prime}\left(
t\right)  -\mathrm{tr}\,\sigma_{\theta}M\left(  t\right)  \right)  -e_{\theta
}\right\}  \mathrm{d}\pi\\
&  =\inf_{M^{\prime}}\mathrm{tr}\,\left(  \int_{\Theta}2a_{\theta}\rho
_{\theta}\mathrm{d}\pi\right)  M^{\prime}\left(  0\right)  -\mathrm{tr}%
\,\left(  \int_{\Theta}2a_{\theta}\sigma_{\theta}\mathrm{d}\pi\right)
M\left(  0\right)  -\int_{\theta}e_{\theta}\mathrm{d}\pi\\
&  =-\left\Vert \int_{\Theta}a_{\theta}\rho_{\theta}\mathrm{d}\pi\right\Vert
_{1}+\int_{\Theta}a_{\theta}\mathrm{d}\pi-\mathrm{tr}\,\left(  \int_{\Theta
}2a_{\theta}\sigma_{\theta}\mathrm{d}\pi\right)  M\left(  0\right)
-\int_{\theta}e_{\theta}\mathrm{d}\pi\leq0
\end{align*}
Since this holds for any $M$, we have
\[
-\left\Vert \int_{\Theta}a_{\theta}\rho_{\theta}\mathrm{d}\pi\right\Vert
_{1}+\left\Vert \int_{\Theta}a_{\theta}\sigma_{\theta}\mathrm{d}\pi\right\Vert
_{1}-\int_{\theta}e_{\theta}\mathrm{d}\pi\leq0,
\]
or
\begin{equation}
\left\Vert \int_{\Theta}a_{\theta}\rho_{\theta}\mathrm{d}\pi\right\Vert
_{1}\geq\left\Vert \int_{\Theta}a_{\theta}\sigma_{\theta}\mathrm{d}%
\pi\right\Vert _{1}-\int_{\theta}e_{\theta}\mathrm{d}\pi, \label{int-a-rho}%
\end{equation}
where $\theta\rightarrow a_{\theta}$ is an arbitrary function with $\left\vert
a_{\theta}\right\vert \,<1$. Especially when $\Theta=\left\{  0,1\right\}  $,
this is equivalent to
\begin{equation}
\left\Vert \rho_{0}-s\rho_{1}\right\Vert _{1}\geq\left\Vert \sigma_{0}%
-s\sigma_{1}\right\Vert _{1}-e_{0}-se_{1},\,\forall s\geq0. \label{r-tr-e}%
\end{equation}

In case $\dim\mathcal{H}=\dim\mathcal{K}=2$, it is known that \
\begin{equation}
\left\Vert \rho_{0}-s\rho_{1}\right\Vert _{1}\geq\left\Vert \sigma_{0}%
-s\sigma_{1}\right\Vert _{1},\,\,\,\forall s\geq0, \label{r-tr}%
\end{equation}
is necessary and sufficient for $\mathcal{E\geq}_{0}^{q}\mathcal{F}$
\cite{AlbertiUhlmann}. In other words, $\mathcal{E\geq}_{0}^{q}\mathcal{F}$ is
equivalent to $\mathcal{E\geq}_{0,2}^{c}\mathcal{F}$. \ However, in case that
$\dim\mathcal{H}=\dim\mathcal{K}=3$, (\ref{r-tr}) fails to be sufficinet for
$\mathcal{E\geq}_{0}^{q}\mathcal{F}$ \cite{Chefles}.

In classical case, more strongly, (\ref{r-tr-e}), or $\mathcal{E\geq}%
_{e,2}\mathcal{F}$, is known to be equivalent to $\mathcal{E\geq}%
_{e}\mathcal{F}$ \cite{Torgersen-finite}\cite{Torgersen}. The following
theorem is found independently by \cite{Jencova}.

\begin{theorem}
Suppose $\Theta=\left\{  0,1\right\}  $, and $\left[  \rho_{0},\rho
_{1}\right]  =0$. Then, $\mathcal{E\geq}_{e}^{c}\mathcal{F}$ if and only if
(\ref{r-tr-e}) holds, or $\mathcal{E\geq}_{e,2}^{c}\mathcal{F}$.
\end{theorem}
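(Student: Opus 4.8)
The plan is to show that condition \eqref{r-tr-e} (equivalently $\mathcal{E}\geq_{e,2}^{c}\mathcal{F}$) implies $\mathcal{E}\geq_{e}^{c}\mathcal{F}$, since the converse is trivial (a $2$-decision problem is a special case of a general classical decision problem). By Theorem~\ref{th:c-defficient}, it suffices to verify condition (ii) there: for any decision space $\left(\mathcal{D},\mathfrak{D}\right)$ and any POVM $M$ on $\mathcal{F}$, produce a bounded positive map $f_{0}$ from $\mathcal{S}_{1}\left(\mathcal{H}\right)$ to $ba\left(\mathcal{D},\mathfrak{D}\right)$ preserving the ``mass'' functional such that $\left\Vert f_{0}\left(\rho_{\theta}\right)-f_{M}\left(\sigma_{\theta}\right)\right\Vert_{1}\leq e_{\theta}$ for $\theta\in\left\{0,1\right\}$. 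Equivalently, by Theorem~\ref{th:c-defficient}(i) and the minimax argument already carried out in the proof of Theorem~\ref{th:quantum-random}, it is enough to check, for each probability weight $\pi$ on $\left\{0,1\right\}$ and each loss function $l$ with $0\leq l_{\theta}(t)\leq 1$, that $\inf_{M'}\int l_{\theta}\,\mathrm{tr}\,\rho_{\theta}M'(\mathrm{d}t)\,\mathrm{d}\pi\leq\int\left\{\int l_{\theta}\,\mathrm{tr}\,\sigma_{\theta}M(\mathrm{d}t)+e_{\theta}\right\}\mathrm{d}\pi$.

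The key point is that $\left[\rho_{0},\rho_{1}\right]=0$ lets us diagonalize both $\rho_{0}$ and $\rho_{1}$ simultaneously, so $\mathcal{E}$ is effectively a \emph{classical} experiment on the joint spectrum. The strategy is then: (1) fix an arbitrary decision space $\left(\mathcal{D},\mathfrak{D}\right)$, a loss function $l$, a POVM $M$ on $\mathcal{F}$, and a weight $\pi=(p_0,p_1)$; (2) as in the $k=2$ computation already in the paper, reduce to the single scalar quantity $\mathrm{tr}\,\left(a_{0}\rho_{0}+a_{1}\rho_{1}\right)M'(B)$ for the optimal event $B$ and signed weights $a_{\theta}$ with $\left|a_{\theta}\right|=p_{\theta}$ — here the extra structure of a general $\mathcal{D}$ collapses because the infimum over $M'$ of an affine functional of $M'$ is attained at a two-valued POVM (a projection), namely the positive/negative part decomposition of $a_{0}\rho_{0}+a_{1}\rho_{1}$; (3) observe that the classical inner infimum equals $-\left\Vert a_{0}\rho_{0}+a_{1}\rho_{1}\right\Vert_{1}$ exactly as in the derivation of \eqref{r-tr-e}, and \eqref{r-tr-e} then gives precisely the inequality needed. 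Because $\rho_0,\rho_1$ commute, the optimal $M'$ extracted this way can moreover be chosen as a single $\left\{0,1\right\}$-valued measurement independent of the fine structure of $\mathcal{D}$, which is what makes a $2$-decision guarantee propagate to arbitrary $\mathcal{D}$.

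The main obstacle is step (2)--(3): showing that the infimum over general POVMs $M'$ on an arbitrary $\left(\mathcal{D},\mathfrak{D}\right)$ of the ``replicated'' risk reduces, \emph{uniformly in the loss function and decision space}, to the scalar $2$-outcome quantity controlled by \eqref{r-tr-e}. This is where commutativity of $\rho_0,\rho_1$ is essential: one writes $l_{\theta}(t)\,\mathrm{tr}\,\rho_{\theta}M'(\mathrm{d}t)$ against the common spectral measure of $\rho_0,\rho_1$, integrates out $t$ to get an effective binary loss as in the paper's $k=2$ reduction, and then invokes \eqref{r-tr-e} with the appropriate $s\geq 0$ (the ratio of the two effective weights). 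Assembling these, together with the projection operator $T$ of Lemma~\ref{lem:projection} to pass from $ba$ back to $ca$ exactly as in the proof of Theorem~\ref{th:c-defficient}, yields condition (i) of Theorem~\ref{th:c-defficient}, hence $\mathcal{E}\geq_{e}^{c}\mathcal{F}$, completing the equivalence. I would also remark that without the commutativity hypothesis this argument breaks — one cannot in general reduce a multi-outcome measurement on $\mathcal{E}$ to a binary one — which is consistent with the cited failure of \eqref{r-tr} to be sufficient already in dimension $3$.
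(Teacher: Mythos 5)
Your overall skeleton (trivial direction plus showing (\ref{r-tr-e}) $\Rightarrow\mathcal{E\geq}_{e}^{c}\mathcal{F}$ via Theorem~\ref{th:c-defficient}) matches the paper, but your steps (2)--(3) contain a genuine gap, and it is exactly the step you yourself flag as ``the main obstacle''. The inequality you must verify compares $\inf_{M^{\prime}}\sum_{\theta}p_{\theta}\int l_{\theta}(t)\,\mathrm{tr}\,\rho_{\theta}M^{\prime}(\mathrm{d}t)$ with the risk of a \emph{fixed, multi-outcome} decision $M$ on $\mathcal{F}$ for an \emph{arbitrary} loss $l$ and decision space $(\mathcal{D},\mathfrak{D})$. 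Your claim that ``the infimum over $M^{\prime}$ of an affine functional of $M^{\prime}$ is attained at a two-valued POVM, namely the positive/negative part of $a_{0}\rho_{0}+a_{1}\rho_{1}$'' is false in this generality: that is the minimizer only for the special binary-loss functional appearing in the paper's $k=2$ computation, where $|\mathcal{D}|=2$ allows one to normalize $l_{\theta}(0)-l_{\theta}(1)=\pm1$ and collapse everything to one scalar. For general $l$ the Bayes-optimal $M^{\prime}$ takes many values, and the risk difference against the fixed $M$ is a sum over outcomes $t$ of terms $\sum_{\theta}p_{\theta}l_{\theta}(t)\left(\mathrm{tr}\,\rho_{\theta}M^{\prime}(t)-\mathrm{tr}\,\sigma_{\theta}M(t)\right)$, which a single invocation of (\ref{r-tr-e}) with ``the appropriate $s$'' does not control. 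The assertion that one can ``integrate out $t$ to get an effective binary loss'' is precisely the statement that, for a two-point parameter set, $e$-deficiency for $2$-decision problems implies $e$-deficiency for all decision problems; this is a nontrivial classical theorem of Torgersen \cite{Torgersen-finite}, not a consequence of affinity of the risk in $M^{\prime}$.

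The paper's proof does not attempt to re-derive that reduction: it uses commutativity to regard $\mathcal{E}$ as a classical experiment, replaces $\mathcal{F}$ by the classical experiments $\mathcal{F}^{M}=\{Q^{M}_{\theta}\}$ with $Q^{M}_{\theta}(\mathrm{d}x)=\mathrm{tr}\,\sigma_{\theta}M(\mathrm{d}x)$ (via Theorem~\ref{th:c-defficient}), quotes the classical dichotomy result that $\mathcal{E\geq}_{e}\mathcal{F}^{M}$ is equivalent to $\left\Vert \rho_{0}-s\rho_{1}\right\Vert_{1}\geq\left\Vert Q_{0}^{M}-sQ_{1}^{M}\right\Vert_{1}-e_{0}-se_{1}$ for all $s\geq0$, and then uses $\max_{M}\left\Vert Q_{0}^{M}-sQ_{1}^{M}\right\Vert_{1}=\left\Vert \sigma_{0}-s\sigma_{1}\right\Vert_{1}$ to pass to (\ref{r-tr-e}). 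Your proposal correctly identifies where $[\rho_{0},\rho_{1}]=0$ enters (it makes $\mathcal{E}$ classical, so classical comparison theory applies) and correctly recycles Lemma~\ref{lem:projection} for the $ba$-to-$ca$ step, but to be a proof it must either cite the classical two-point-parameter theorem as the paper does, or actually prove it (e.g.\ via the standard-measure/Neyman--Pearson representation of dichotomies); as written, the crucial collapse from arbitrary $(\mathcal{D},l,M)$ to the binary criterion is asserted rather than established.
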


\begin{proof}
Let $\mathcal{F}^{M}$ be a classical experiment consisted with $Q_{\theta}%
^{M}$ respecitively, where $Q_{\theta}^{M}\left(  \mathrm{d}x\right)
=\mathrm{tr}\,\sigma_{\theta}M\left(  \mathrm{d}x\right)  $. Then, by
Theorem\thinspace\ref{th:c-defficient}, $\mathcal{E\geq}_{e}^{c}\mathcal{F}$
if and only if
\[
\,\,\mathcal{E\geq}_{e}\mathcal{F}^{M},\forall M.\,
\]
As noted above, this equivalent to \cite{Torgersen-finite}
\[
\,\left\Vert \rho_{0}-s\rho_{1}\right\Vert _{1}\geq\left\Vert Q_{0}^{M}%
-sQ_{1}^{M}\right\Vert _{1}-e_{0}-se_{1},\,\forall M,\,\forall s\geq0.\,
\]
Therefore, since
\[
\,\max_{M}\left\Vert Q_{0}^{M}-sQ_{1}^{M}\right\Vert _{1}=\left\Vert
\sigma_{0}-s\sigma_{1}\right\Vert _{1},
\]
we have the assertion.
\end{proof}

\cite{Buscemi} introduced the notion of \textit{statistical morphism}, which
we use here with some non-essential modifications. A map $\Gamma$ from
$\left\{  \rho_{\theta}\right\}  _{\theta\in\Theta}\subset$ $\mathcal{B}%
_{1}\left(  \mathcal{H}\right)  $ into $\mathcal{B}_{1}\left(  \mathcal{K}%
\right)  $ is said to be $k$-statistical morphism if and only if for
$k$-decision $M$ over $\mathcal{H}$, there exists \ a $k$-decision $M^{\prime
}$ over $\mathcal{K}$ with
\begin{equation}
\mathrm{tr}\,\Gamma\left(  \rho_{\theta}\right)  M\left(  t\right)
=\mathrm{tr}\,\rho_{\theta}M^{\prime}\left(  t\right)  \,,\,\,\forall\theta
\in\Theta. \label{statistical-morphism}%
\end{equation}
$\mathcal{E\geq}_{0,k}^{c}\mathcal{F}$ is equivalent to the existence of
$k$-statistical morphism $\Gamma$ on $\left\{  \rho_{\theta}\right\}
_{\theta\in\Theta}$ with $\Gamma\left(  \rho_{\theta}\right)  =\sigma_{\theta
}$.

Obviously, any positive linear, and trace preserving map $\Gamma$ with
$\Gamma\left(  \rho_{\theta}\right)  =\sigma_{\theta}$ , $\forall\theta
\in\Theta$, is $k$-statistical morphism, for any $k$. The following lemma has
some implications on its converse statement.

\begin{lemma}
\label{lem:morphism}Suppose $\dim\mathcal{H}<\infty$. Then, any $k$%
-statistical morphism $\Gamma$ on $\left\{  \rho_{\theta}\right\}  _{\theta
\in\Theta}$ can be extended to a linear, trace preserving, and positive map
$\Gamma^{\prime}$ to $\mathrm{span}\left\{  \rho_{\theta}\right\}
_{_{\theta\in\Theta}}$.
\end{lemma}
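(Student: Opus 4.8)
\textbf{Proof plan for Lemma \ref{lem:morphism}.}

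The plan is to realize $\Gamma$ as a limit of genuinely positive trace-preserving maps and then pass to a limit using finite-dimensionality. First I would reformulate the $k$-statistical-morphism condition in the language of Section 6: for a $k$-decision $M$ (a POVM with $k$ outcomes on $\mathcal H$), the data $\left(\mathrm{tr}\,\rho_\theta M'(t)\right)_{t,\theta}$ that can arise from $k$-decisions $M'$ on $\mathcal{K}$ form exactly the set of achievable ``classical experiments'' obtainable from $\{\sigma_\theta\}$ via $k$-decisions; the morphism condition says that, for every $M$ on $\mathcal H$, the classical experiment $\left(\mathrm{tr}\,\Gamma(\rho_\theta)M(t)\right)$ lies in the closure of the $k$-decision-achievable experiments on $\{\rho_\theta\}$ — i.e. there is no $k$-decision on $\mathcal{H}$ that achieves it exactly, only arbitrarily closely. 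Using Theorem \ref{th:c-defficient-2} (with $e\equiv 0$), this is precisely the statement that the finite experiment $\mathcal F=(\mathcal K,\{\sigma_\theta\})$ is c-$0$-deficient for $k$-decision problems relative to $\mathcal E=(\mathcal H,\{\rho_\theta\})$ — or rather, I should be careful about which direction: the morphism sends $\rho_\theta\mapsto\sigma_\theta$ and transfers decisions from $\mathcal H$ to $\mathcal K$, so it is $\mathcal E\geq^c_{0,k}\mathcal F$ in the notation of the excerpt.

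Next I would invoke Theorem \ref{th:c-defficient-2}(ii): $\mathcal E\geq^c_{0,k}\mathcal F$ gives, for every $k$-decision $M$ on $\mathcal F$, a $k$-decision $M'$ on $\mathcal E$ with $f_{M'}(\rho_\theta)=f_M(\sigma_\theta)$ for all $\theta$ — an exact equality, the deficiency being $0$. Now I want to build a single linear map. Since $\dim\mathcal H<\infty$, the span $V:=\mathrm{span}\{\rho_\theta\}_{\theta\in\Theta}$ is finite-dimensional; pick a finite subset $\theta_1,\dots,\theta_m$ whose $\rho_{\theta_i}$ are a basis of $V$. For a carefully chosen family of $k$-decisions $M$ (enough to separate points, or equivalently, letting $k$ range so that identity-like decisions are available — here we must use the hypothesis that $k$ is such that $\mathcal K$ itself can be probed; in the regime $k\geq\dim\mathcal K$ the POVM $M$ can be taken ``informationally complete'') the equalities $\mathrm{tr}\,M'_M(\rho_{\theta})B = \mathrm{tr}\,\sigma_\theta B$ pin down a linear map $\Gamma':V\to\mathcal S_1(\mathcal K)$ sending $\rho_\theta\mapsto\sigma_\theta$, defined by extending linearly off the basis; positivity and trace preservation on $V$ are inherited from those of the $f_{M'}$ on the cone, together with the identity $f_{M'}(\rho)(\mathcal K)=\mathrm{tr}\,\rho$. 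Concretely: define $\Gamma'(\sum c_i\rho_{\theta_i}):=\sum c_i\sigma_{\theta_i}$; linearity and the interpolation property are automatic, and the content is that this is positive, i.e. that $\sum c_i\rho_{\theta_i}\geq 0 \Rightarrow \sum c_i\sigma_{\theta_i}\geq 0$, and trace-preserving, i.e. $\mathrm{tr}\sum c_i\rho_{\theta_i}=\mathrm{tr}\sum c_i\sigma_{\theta_i}$.

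The main obstacle — and where the hypothesis $\dim\mathcal H<\infty$ is really used — is passing from the \emph{approximate} transfer (\ref{statistical-morphism}) to an \emph{exact} linear map, and then checking positivity. For exactness I would argue: the set of $k$-decisions on $\mathcal K$ mapping into $\left(ba\right)^{\mathrm{ball}\,\mathcal S_1(\mathcal K)}$ is compact (Lemma \ref{lem:c-compact}), so the $\varepsilon\to 0$ family $M'_\varepsilon$ has a cluster point $M'_0$, and by continuity of $f\mapsto\int l\,f(X)$ the cluster point achieves $f_{M'_0}(\rho_\theta)=f_M(\sigma_\theta)$ exactly — this is exactly the compactness step used in the proof of Theorem \ref{th:c-defficient}. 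Finite-dimensionality of $\mathcal H$ then makes $V$ finite-dimensional so that only finitely many such limiting arguments are needed (one per basis vector, or one for a separating finite POVM), after which the linear extension is forced and unique. For positivity: if $\rho=\sum c_i\rho_{\theta_i}\geq 0$, choose $M$ on $\mathcal K$ to be a rank-one POVM along an arbitrary unit vector $\psi$ (together with its complement) — then $\mathrm{tr}\,\Gamma'(\rho)|\psi\rangle\langle\psi| = \lim \mathrm{tr}\,\rho\,M'_\varepsilon(\psi)\geq 0$ since $M'_\varepsilon(\psi)\geq 0$ and $\rho\geq 0$; as $\psi$ was arbitrary, $\Gamma'(\rho)\geq 0$. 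Trace preservation follows identically from $f_{M'}(\rho)(\mathcal K)=\mathrm{tr}\,\rho$ in the limit. I expect the bookkeeping around ``which $k$-decisions $M$ suffice'' (needing $M$ informationally complete on $\mathcal K$, hence implicitly $k\geq(\dim\mathcal K)^2$, or handling general $k$ by a direct argument with rank-one effects as above) to be the fiddly point, but the rank-one-effect argument sidesteps it since two-outcome POVMs are $k$-decisions for every $k\geq 2$.
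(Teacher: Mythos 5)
Your argument reaches the same endpoint as the paper but by a noticeably heavier route. The paper's proof works directly from the definition (\ref{statistical-morphism}): it fixes a linearly independent subfamily $\rho_{\theta_1},\dots,\rho_{\theta_n}$ spanning $\mathrm{span}\{\rho_\theta\}$, defines $\Gamma'\bigl(\sum_i a_i\rho_{\theta_i}\bigr):=\sum_i a_i\Gamma(\rho_{\theta_i})$, and then uses the approximate transfer itself --- applied to an arbitrary measurement $M$, with the same $M'$ serving all $\theta$ simultaneously --- first to show $\Gamma'(\rho_\theta)=\Gamma(\rho_\theta)$ for every $\theta$ (so that $\Gamma'$ really extends $\Gamma$), and second to get $\mathrm{tr}\,\Gamma'(\rho)M\geq -n\varepsilon$ for every effect $0\leq M\leq\mathbf{1}$ and every $\varepsilon>0$, whence positivity; no exact transfer, no compactness, and no appeal to the deficiency theorems is needed. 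You instead recast the morphism condition as $\mathcal{E}\geq^{c}_{0,k}\mathcal{F}$ with $\sigma_\theta:=\Gamma(\rho_\theta)$ and invoke Theorem \ref{th:c-defficient-2}(ii) --- which rests on Lemma \ref{lem:c-compact} and Lemma \ref{lem:projection} --- to upgrade the $\varepsilon$-approximate transfer to an exact one, and then run essentially the same linear-extension and rank-one-effect verification. This is legitimate (there is no circularity, since Lemma \ref{lem:morphism} is not used to prove Theorem \ref{th:c-defficient-2}) and it buys you exact equalities to work with; but the upgrade is not actually needed, as your own positivity computation already tolerates the $\varepsilon$ errors, and it obliges you to check in passing that the $\Gamma(\rho_\theta)$ are genuine states so that $\mathcal{F}$ is an experiment (this does follow from (\ref{statistical-morphism}) applied to the trivial and to rank-one two-outcome measurements, but it should be said). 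One caution on wording: the ``interpolation property'' is not automatic from the definition $\Gamma'\bigl(\sum_i c_i\rho_{\theta_i}\bigr):=\sum_i c_i\sigma_{\theta_i}$; the agreement $\Gamma'(\rho_\theta)=\sigma_\theta$ at the non-basis parameters is precisely the point that requires your separating family of two-outcome rank-one measurements, using that the (approximate or exact) transfer holds uniformly in $\theta$ --- this is the same consistency step the paper carries out --- so your proof is complete only once that sentence is read as a claim to be proved rather than a triviality.
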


\begin{proof}
Let $\left\{  \rho_{\theta_{i}}\right\}  _{i=1}^{n}$ be linear independent
elements of $\left\{  \rho_{\theta}\right\}  _{_{\theta\in\Theta}}$ and define
$\Gamma^{\prime}$ by linear combination of $\left\{  \Gamma\left(
\rho_{\theta_{i}}\right)  \right\}  _{i}$:%
\[
\Gamma^{\prime}\left(  \sum_{i=1}^{n}a_{i}\rho_{\theta_{i}}\right)
=\sum_{i=1}^{n}a_{i}\Gamma\left(  \rho_{\theta_{i}}\right)  .
\]

Obviously, $\Gamma^{\prime}$ is linear and trace preserving. First, we prove
$\Gamma\left(  \rho_{\theta}\right)  =\Gamma^{\prime}\left(  \rho_{\theta
}\right)  $; By definiton, for any $M$ and for any $\varepsilon>0$, there is
$M^{\prime}$ with (\ref{statistical-morphism}). Let $\rho_{\theta}=\sum
_{i=1}^{n}a_{i}\rho_{\theta_{i}}$. Then,%
\begin{align*}
\mathrm{tr}\,\Gamma\left(  \rho_{\theta}\right)  M\left(  t\right)   &
=\mathrm{tr}\,\rho_{\theta}M^{\prime}\left(  t\right) \\
&  =\sum_{i=1}^{n}a_{i}\mathrm{tr}\,\rho_{\theta_{i}}M^{\prime}\left(
t\right) \\
&  =\mathrm{tr}\,\Gamma^{\prime}\left(  \rho_{\theta}\right)  M\left(
t\right)  .
\end{align*}
Since $M$ is arbitrary $k$-valued measurement, we have $\Gamma\left(
\rho_{\theta}\right)  =\Gamma^{\prime}\left(  \rho_{\theta}\right)  $, and
\ $\Gamma^{\prime}$ is a linear extention of $\Gamma$.

Finally, we prove that $\Gamma^{\prime}$ is positive on $\mathrm{span}\left\{
\rho_{\theta}\right\}  _{_{\theta\in\Theta}}$. For any positive matrix
$M\leq\mathbf{1}$ and any $\rho=\sum_{i}a_{i}\rho_{\theta_{i}}\geq0$,
\[
\mathrm{tr}\,\Gamma^{\prime}\left(  \rho\right)  M=\mathrm{tr}\,\sum_{i=1}%
^{n}a_{i}\,\Gamma\left(  \rho_{\theta_{i}}\right)  M\geq\mathrm{tr}\sum
_{i=1}^{n}a_{i}\rho_{\theta_{i}}M^{\prime}-n\varepsilon\geq-n\varepsilon.
\]
Since $\varepsilon>0$ and $M\geq0$ are arbitrary, we have positivity of
$\Gamma^{\prime}$ on \ $\mathrm{span}\left\{  \rho_{\theta}\right\}
_{_{\theta\in\Theta}}$.
\end{proof}

\begin{theorem}
Suppose $\dim\mathcal{H}<\infty$ and $\mathrm{span}\left\{  \rho_{\theta
}\right\}  _{_{\theta\in\Theta}}$ is the totality of Hermitian matrices. Then,
$\mathcal{E\geq}_{0,k}^{c}\mathcal{F}$ holds if and only if there is a
positive trace preserving map $\Gamma$ with $\Gamma\left(  \rho_{\theta
}\right)  =\sigma_{\theta}$, $\forall\theta\in\Theta$ . Namely,
$\mathcal{E\geq}_{0,2}^{c}\mathcal{F}$, $\mathcal{E\geq}_{0,3}^{c}\mathcal{F}$
, $\cdots$, $\mathcal{E\geq}_{0,k}^{c}\mathcal{F}$ are all euqivalent to
$\mathcal{E\geq}_{0}^{c}\mathcal{F}$.
\end{theorem}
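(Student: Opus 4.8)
The plan is to show that for every $k\geq 2$ (and likewise for $k=\infty$, i.e.\ $\mathcal{E}\geq_0^c\mathcal{F}$) the property $\mathcal{E}\geq_{0,k}^c\mathcal{F}$ is equivalent to the single, $k$-independent assertion that there is a positive trace-preserving map $\Gamma$ with $\Gamma(\rho_\theta)=\sigma_\theta$ for all $\theta$; the ``Namely'' clause then follows at once. The case $k=\infty$ is handled exactly as the finite case with Theorem~\ref{th:c-defficient} used in place of Theorem~\ref{th:c-defficient-2}, so I concentrate on finite $k\geq 2$ below.

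For the ``if'' direction I would argue directly. Given a positive trace-preserving $\Gamma$ with $\Gamma(\rho_\theta)=\sigma_\theta$, pass to the dual map $\Gamma^{\ast}$ on $\mathcal{B}(\mathcal{K})$: positivity of $\Gamma$ makes $\Gamma^{\ast}$ positive and trace preservation makes $\Gamma^{\ast}$ unital, so $M\mapsto\Gamma^{\ast}\!\circ M$ carries a POVM with $k$ outcomes on $\mathcal{K}$ to one on $\mathcal{H}$. Then for any $k$-decision $M$ on $\mathcal{F}$, putting $M':=\Gamma^{\ast}\!\circ M$ gives $\mathrm{tr}\,\rho_\theta M'(t)=\mathrm{tr}\,\Gamma(\rho_\theta)M(t)=\mathrm{tr}\,\sigma_\theta M(t)$ for all $\theta$ and $t$, so the classical experiments induced by $M'$ on $\mathcal{E}$ and by $M$ on $\mathcal{F}$ coincide, whence $\mathcal{E}\geq_{0,k}^c\mathcal{F}$ with $e\equiv 0$. (Equivalently: by the remark preceding Lemma~\ref{lem:morphism} such a $\Gamma$ is a $k$-statistical morphism, and one may quote the statistical-morphism characterization recalled there.)

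For the ``only if'' direction I would build $\Gamma$ on a basis of states and then check its properties. Fix $\theta_1,\dots,\theta_n$ with $\{\rho_{\theta_i}\}$ a basis of the real space of Hermitian matrices---available by hypothesis---and define $\Gamma'(\sum_i a_i\rho_{\theta_i}):=\sum_i a_i\sigma_{\theta_i}$, extended $\mathbb{C}$-linearly to all of $\mathcal{S}_1(\mathcal{H})$. Trace preservation is automatic, since $\mathrm{tr}\circ\Gamma'$ and $\mathrm{tr}$ agree on the $\rho_{\theta_i}$ (every state has trace $1$). Next apply Theorem~\ref{th:c-defficient-2}(ii) with $e\equiv 0$ to the two-outcome POVM $(P,\mathbf{1}-P)$, where $0\leq P\leq\mathbf{1}$ is arbitrary (a legitimate $k$-decision because $k\geq 2$): it produces $M'$ with $M'(0)\geq 0$ and $\mathrm{tr}\,\rho_\theta M'(0)=\mathrm{tr}\,\sigma_\theta P$ for all $\theta$. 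Expanding $\rho_\theta=\sum_i a_i\rho_{\theta_i}$ gives $\mathrm{tr}\,\Gamma'(\rho_\theta)P=\sum_i a_i\,\mathrm{tr}\,\rho_{\theta_i}M'(0)=\mathrm{tr}\,\rho_\theta M'(0)=\mathrm{tr}\,\sigma_\theta P$ for every such $P$, so $\Gamma'(\rho_\theta)=\sigma_\theta$ by non-degeneracy of the trace pairing; and for $\rho=\sum_i a_i\rho_{\theta_i}\geq 0$ the same chain gives $\mathrm{tr}\,\Gamma'(\rho)P=\mathrm{tr}\,\rho M'(0)\geq 0$, hence $\Gamma'\geq 0$. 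A shorter route instead invokes the statistical-morphism characterization of $\mathcal{E}\geq_{0,k}^c\mathcal{F}$ to get a $k$-statistical morphism $\Gamma$ with $\Gamma(\rho_\theta)=\sigma_\theta$ and then uses Lemma~\ref{lem:morphism} to extend it to a positive, trace-preserving linear map on $\mathrm{span}\{\rho_\theta\}$, which here is the whole Hermitian space; the computation above simply inlines that argument and avoids the side conditions behind the cited characterization.

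The main obstacle is exactly this upgrade from ``every measurement downstream can be matched by one upstream'' to ``there is a genuine positive linear map'': pinning down $\Gamma'(\rho_\theta)=\sigma_\theta$ for parameters outside the chosen basis, and obtaining positivity rather than mere well-definedness. In the statistical-morphism route it is the content of Lemma~\ref{lem:morphism} (one must also check that the characterization recalled before that lemma is applicable here, i.e.\ that its hypotheses hold when $\dim\mathcal{H}<\infty$); in the direct route it is the two-outcome-POVM-with-positive-matching-measurement trick above. Shuttling between $\mathcal{S}_1(\mathcal{H})$, the real space of Hermitian matrices, and the full matrix algebra is harmless in finite dimension and can be dispatched in a sentence.
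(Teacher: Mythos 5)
Your proposal is correct, and the core idea of your ``only if'' direction is the same extension trick that the paper packages as Lemma~\ref{lem:morphism}: define $\Gamma'$ linearly on a basis $\{\rho_{\theta_i}\}$ of the Hermitian matrices and verify $\Gamma'(\rho_\theta)=\sigma_\theta$ and positivity by testing against effects coming from matched measurements. The difference is in how you feed that argument and how you close the converse. The paper simply cites the statistical-morphism characterization of $\mathcal{E}\geq_{0,k}^{c}\mathcal{F}$ (stated ``under the assumptions (A$'$) or (B)'', which are never actually formulated in the text) together with Lemma~\ref{lem:morphism}, and for the ``if'' direction relies on the remark that a positive trace-preserving map is a $k$-statistical morphism plus that same characterization. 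You instead invoke Theorem~\ref{th:c-defficient-2}(ii) with $e\equiv 0$ to get an \emph{exactly} matching measurement $M'$ for each two-outcome POVM $(P,\mathbf{1}-P)$, which lets you run the extension argument with equalities rather than the $\varepsilon$-approximations in the definition of a statistical morphism, and you prove the ``if'' direction directly by pulling POVMs back through the unital positive dual map $\Gamma^{\ast}$. What your route buys is self-containedness: it bypasses the unstated side conditions (A$'$)/(B) and the $\varepsilon$-bookkeeping, at the cost of being longer than the paper's two-line citation. Two trivial points worth a sentence if you write it up: Theorem~\ref{th:c-defficient-2} is phrased with $\left\vert\mathcal{D}\right\vert=k$, so either pad $(P,\mathbf{1}-P)$ with zero outcomes or appeal to the definition's restriction $\left\vert\mathcal{D}\right\vert\leq k$; and the matching $M'$ depends on $P$, which is harmless since each $P$ is treated separately.
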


\begin{proof}
The first statement follows directly from Lemma\thinspace\ref{lem:morphism}.
As for the second statement, it is obvious that $\mathcal{E\geq}_{0}%
^{c}\mathcal{F}$\thinspace\ implies $\mathcal{E\geq}_{0,2}^{c}\mathcal{F}$,
$\mathcal{E\geq}_{0,3}^{c}\mathcal{F}$ , $\cdots$, $\mathcal{E\geq}_{0,k}%
^{c}\mathcal{F}$. Conversely, suppose $\mathcal{E\geq}_{0,2}^{c}\mathcal{F}$.
Then by Lemma \ref{lem:morphism}, there is a positive linear, and trace
preserving map $\Gamma$ with $\Gamma\left(  \rho_{\theta}\right)
=\sigma_{\theta}$, $\forall\theta\in\Theta$, which implies $\mathcal{E\geq
}_{0}^{c}\mathcal{F}$.
\end{proof}

Classically, it is known that $\mathcal{E\geq}_{0,2}\mathcal{F}$,
$\mathcal{E\geq}_{0,3}\mathcal{F}$ , $\cdots$, $\mathcal{E\geq}_{0,k}%
\mathcal{F}$ are all equivalent to $\mathcal{E\geq}_{0}\mathcal{F}$, provided
$\Theta$ is a finite set \cite{Torgersen-finite}\cite{Torgersen}. The above
theorem is a quantum version of this statement.

\section{Compact covariant experiments}

Let $\dim\mathcal{H}<\infty$, $\dim\mathcal{K}<\infty$. Let $G$ be a compact
group, and $g\rightarrow U_{g}\in\mathrm{SU}\left(  \mathcal{H}\right)  $ and
$g\rightarrow V_{g}\in\mathrm{SU}\left(  \mathcal{K}\right)  $ be
representations of $G$. Suppose that there is a natural action $\theta
\rightarrow g\theta$ of $g\in$ $G$ on $\theta\in\Theta$. Moreover, we suppose
that for any $\theta$, there is $g\in G$ with $g0=\theta$. Then we consider
the covariant experiments, which satisfy
\[
\rho_{g\theta}=U_{g}\rho_{\theta}U_{g}^{\dagger},\,\,\text{ }\sigma_{g\theta
}=V_{g}\sigma_{\theta}V_{g}^{\dagger},
\]
or
\[
\rho_{g0}=U_{g}\rho_{0}U_{g}^{\dagger},\,\,\text{ }\sigma_{g0}=V_{g}\sigma
_{0}V_{g}^{\dagger}.
\]

We further suppose that the assumptions of \ref{th:quantum-random-2}, which
are conditions (A') and (B), hold true. Then, Due to Theorem
\ref{th:quantum-random-2}, we have%

\begin{align*}
\delta^{q}\left(  \mathcal{E},\mathcal{F}\right)   &  =\inf_{\Phi}\sup
_{\theta\in\Theta}\left\Vert \Phi\left(  \rho_{\theta}\right)  -\sigma
_{\theta}\right\Vert _{1}\\
&  =\inf_{\Phi}\sup_{g\in G}\left\Vert \Phi\left(  U_{g}\rho_{0}U_{g}%
^{\dagger}\right)  -V_{g}\sigma_{0}V_{g}^{\dagger}\right\Vert _{1}\\
&  =\inf_{\Phi}\sup_{g\in G}\left\Vert V_{g}^{\dagger}\Phi\left(  U_{g}%
\rho_{0}U_{g}^{\dagger}\right)  V_{g}-\sigma_{0}\right\Vert _{1}.
\end{align*}

Denote by $\mathbf{M}$ the average with respect to Haar measure of $G$, and
define
\[
\Phi_{\ast}\left(  \rho\right)  :=\mathbf{M}V_{g}^{\dagger}\Phi\left(
U_{g}\rho_{0}U_{g}^{\dagger}\right)  V_{g}.
\]
Then, $\Phi_{\ast}$ is covariant,
\begin{equation}
\Phi_{\ast}\left(  U_{g}\rho U_{g}^{\dagger}\right)  =V_{g}\Phi_{\ast}\left(
\rho\right)  V_{g}^{\dagger}, \label{covariant-compact}%
\end{equation}
and, by convexity of the norm $\left\Vert \cdot\right\Vert _{1}$,
\begin{align*}
&  \sup_{g\in G}\left\Vert V_{g}^{\dagger}\Phi\left(  U_{g}\rho_{0}%
U_{g}^{\dagger}\right)  V_{g}-\sigma_{0}\right\Vert _{1}\\
&  \geq\left\Vert \Phi_{\ast}\left(  \rho_{0}\right)  -\sigma_{0}\right\Vert
_{1}\\
&  =\left\Vert V_{g}^{\dagger}\Phi_{\ast}\left(  U_{g}\rho_{0}U_{g}^{\dagger
}\right)  V_{g}-\sigma_{0}\right\Vert _{1},\forall g\in G.
\end{align*}
Therefore,
\[
\delta^{q}\left(  \mathcal{E},\mathcal{F}\right)  =\inf_{\Phi_{\ast}%
}\left\Vert \Phi_{\ast}\left(  \rho_{0}\right)  -\sigma_{0}\right\Vert _{1},
\]
where $\Phi_{\ast}$ runs over all the CPTP maps with (\ref{covariant-compact}).

Let \ $C_{\Phi_{\ast}}$ be the Choi's representation of a channel $\Phi_{\ast
}$,%
\[
C_{\Phi_{\ast}}:=\Phi_{\ast}\otimes\mathbf{I}\left(  \sum_{i,j=1}%
^{\dim\mathcal{H}}\left\vert i\right\rangle \left\vert i\right\rangle
\left\langle j\right\vert \left\langle j\right\vert \right)  ,
\]
where $\left\{  \left\vert i\right\rangle \right\}  $ is a CONS of
$\mathcal{H}$. Then, (\ref{covariant-compact}) can be written as
\begin{equation}
\left[  \overline{U_{g}}\otimes V_{g},C_{\Phi_{\ast}}\right]  =0\,,\,\,\left(
g\in G\right)  , \label{choi-commute}%
\end{equation}

\begin{example}
$\mathcal{H}=\mathcal{K=}%
\mathbb{C}
^{d}$, $G=\mathrm{SU}\left(  d\right)  $, $U_{g}=g$, and $V_{g}=VgV^{\dagger}$
Then, $\Phi_{\ast}$ has to be depolarizaing channel,
\[
\Phi_{\ast}\left(  X\right)  :=\frac{\left(  1-\lambda\right)  \left(
\mathrm{\mathrm{tr}}\,X\right)  }{d}\mathbf{1}\,+\lambda V^{\dagger
}XV,\,\,\,\,\left(  0\leq\lambda\leq1\right)  .
\]
Hence,
\[
\mathcal{E\geq}_{0}^{q}\,\mathcal{F}\Leftrightarrow\sigma_{0}=\frac{\lambda
}{d}\mathbf{1}\,+\left(  1-\lambda\right)  V^{\dagger}\rho_{0}V\,.
\]
Especially, suppose $\rho_{0}$ and $\sigma_{0}$ have the same spectrum. Then,
although the set $\left\{  U\rho_{0}U^{\dagger}\right\}  _{U\in\mathrm{SU}%
\left(  d\right)  }$ equals the set $\left\{  U\sigma_{0}U^{\dagger}\right\}
_{U\in\mathrm{SU}\left(  d\right)  }$, $\mathcal{E\ngeq}_{0}^{q}\,\mathcal{F}$
unless $V^{\dagger}\rho_{0}V=\sigma_{0}$.

Now, let $\mathcal{H}=\mathcal{K}=%
\mathbb{C}
^{2}$, and
\begin{align*}
V^{\dagger}\rho_{0}V  &  =\frac{1}{2}\left[
\begin{array}
[c]{cc}%
1+u & 0\\
0 & 1-u
\end{array}
\right]  \,\,\,\,\left(  u\geq0\right)  ,\,\,\\
\,\sigma_{0}  &  =\frac{1}{2}\left[
\begin{array}
[c]{cc}%
1+z & x-\sqrt{-1}y\\
x+\sqrt{-1}y & 1-z
\end{array}
\right]  .
\end{align*}
Then,
\begin{align*}
\delta^{q}\left(  \mathcal{E},\mathcal{F}\right)   &  =\inf_{\Phi_{\ast}%
}\left\Vert \Phi_{\ast}\left(  \rho_{0}\right)  -\sigma_{0}\right\Vert _{1}\\
&  =\inf_{\lambda:0\leq\lambda\leq1}\frac{1}{2}\sqrt{\left(  z-\lambda
u\right)  +x^{2}+y^{2}}\,\\
&  =\left\{
\begin{array}
[c]{cc}%
\frac{1}{2}\sqrt{\left(  z-u\right)  ^{2}+x^{2}+y^{2}}, & \left(  z\geq
u\right)  ,\\
\frac{1}{2}\sqrt{x^{2}+y^{2}}, & \left(  0\leq z\leq u\right)  ,\\
\frac{1}{2}\sqrt{z^{2}+x^{2}+y^{2}}, & \left(  z\leq0\right)  .
\end{array}
\right.
\end{align*}

When $z\leq0$, the optimal $\Phi_{\ast}\left(  \rho_{0}\right)  =\frac{1}%
{2}\mathbf{1}$. Thus, best approximate experiment $\mathcal{E}^{\prime}$ to
$\mathcal{F}$ with $\mathcal{E\geq}_{0}^{q}\,\mathcal{E}^{\prime}$\ consists
of $\frac{1}{2}\mathbf{1}$ only. Put differently, \ $\mathcal{E}^{\prime
}=\left(
\mathbb{C}
^{2},\left\{  \rho_{\theta}^{\prime};\theta\in\Theta\right\}  \right)  $,
where $\rho_{\theta}^{\prime}=\frac{1}{2}\mathbf{1}$ for all $\theta\in\Theta$.
\end{example}

\begin{example}
$\mathcal{H}=\mathcal{K=}%
\mathbb{C}
^{d}$, $d$ is prime power, and $G=\left\{  X_{d}^{s}Z_{d}^{t}\right\}
_{s,t\in\left\{  0,1,\cdots,d-1\right\}  }$, where
\begin{align*}
X_{d}  &  =\left[
\begin{array}
[c]{ccccc}%
0 & \cdots & \cdots & 0 & 1\\
1 & 0 & \ddots & \ddots & 0\\
0 & 1 & \ddots & \ddots & \vdots\\
\vdots & \ddots & \ddots & 0 & \vdots\\
0 & \cdots & 0 & 1 & 0
\end{array}
\right]  ,\,\\
Z_{d}  &  =\left[
\begin{array}
[c]{ccccc}%
1 & 0 & \cdots & \cdots & 0\\
0 & \exp\left(  \sqrt{-1}2\pi/d\right)  & 0 & \ddots & \vdots\\
\vdots & 0 & \exp\left(  \sqrt{-1}4\pi/d\right)  & \ddots & \vdots\\
\vdots & \ddots & \ddots & \ddots & 0\\
0 & \cdots & \cdots & 0 & \exp\left(  \sqrt{-1}2\pi\left(  d-1\right)
/d\right)
\end{array}
\right]  .
\end{align*}
Also, $U_{g}=V_{g}=g$. Note that
\[
C_{\Phi_{\ast}}=\sum_{t,s,t^{\prime},s^{\prime}\in\left\{  0,1,\cdots
,d-1\right\}  }a_{t,s,t^{\prime}s^{\prime}}X_{d}^{t}Z_{d}^{s}\otimes
X_{d}^{t^{\prime}}Z_{d}^{s^{\prime}},
\]
where $a_{t,s,t^{\prime},s^{\prime}}$ are complex numbers. Since
\begin{align*}
&  \left(  X_{d}^{t^{\prime\prime}}Z_{d}^{-s^{\prime\prime}}\otimes
X_{d}^{t^{\prime\prime}}Z_{d}^{s^{\prime\prime}}\right)  \left(  X_{d}%
^{t}Z_{d}^{s}\otimes X_{d}^{t^{\prime}}Z_{d}^{s^{\prime}}\right) \\
&  =\omega_{d}^{s^{\prime\prime}\left(  t^{\prime}-t\right)  -t^{\prime\prime
}\left(  s+s^{\prime}\right)  }\left(  X_{d}^{t}Z_{d}^{s}\otimes
X_{d}^{t^{\prime}}Z_{d}^{s^{\prime}}\right)  \left(  X_{d}^{t^{\prime\prime}%
}Z_{d}^{-s^{\prime\prime}}\otimes X_{d}^{t^{\prime\prime}}Z_{d}^{s^{\prime
\prime}}\right)  ,
\end{align*}
(\ref{choi-commute}) implies that $a_{t,s,t^{\prime}s^{\prime}}$ takes
non-zero value for $t$,$s$, $t^{\prime}$,$s^{\prime}$ with $t^{\prime}=t$ and
$s^{\prime}=d-s$. Therefore, considering that $Ch_{\Phi_{\ast}}$ is Hermitian,
and that $\Phi_{\ast}$ is trace preserving, the space of channels satisfying
(\ref{covariant-compact}) is (as a real vector space) $d^{2}-1$ dimensional.
On the other hand, a channel \ \
\begin{equation}
\Phi_{\ast}\left(  \rho\right)  =\sum_{t,s\in\left\{  0,1,\cdots,d-1\right\}
}p_{t,s}\left(  X_{d}^{t}Z_{d}^{s}\right)  \rho\left(  X_{d}^{t}Z_{d}%
^{s}\right)  ^{\dagger} \label{d-pauli-channel}%
\end{equation}
satisfies (\ref{covariant-compact}), and the space of channels with
(\ref{d-pauli-channel}) is $d^{2}-1$. Hence, (\ref{covariant-compact}) is
equivalent to (\ref{d-pauli-channel}).

Therefore,
\[
\delta^{q}\left(  \mathcal{E},\mathcal{F}\right)  =\min_{\rho^{\prime}%
}\left\Vert \rho^{\prime}-\sigma_{0}\right\Vert _{1}%
\]
where $\rho^{\prime}$ moves all over the convex hull of the set $\left\{
\left(  X_{d}^{t}Z_{d}^{s}\right)  \rho_{0}\left(  X_{d}^{t}Z_{d}^{s}\right)
\,;\,t,s\in\left\{  0,1,\cdots,d-1\right\}  \right\}  $.

Especially, when $d=2$, letting $\vec{x}_{0}=\left(  x_{01},x_{02}%
,x_{03}\right)  $ and $\vec{y}_{0}$ be the Bloch representation of $\rho_{0}$
and $\sigma_{0}$, respectively, we have
\[
\delta^{q}\left(  \mathcal{E},\mathcal{F}\right)  =\min_{\vec{x}}\left\Vert
\vec{x}-\vec{y}_{0}\right\Vert ,
\]
where and $\vec{x}$ moves all over the convex hull of $\left(  x_{01}%
,x_{02},x_{03}\right)  $, $\left(  -x_{01},-x_{02},x_{03}\right)  $, $\left(
x_{01},-x_{02},-x_{03}\right)  $, and $\left(  -x_{01},x_{02},-x_{03}\right)
$.
\end{example}

\section{Translation experiments}

\subsection{Models and questions}

Let $\dim\mathcal{H}=\dim\mathcal{K=\infty}$ (countable), and define
\[
\rho_{\theta}:=W_{A\theta}\rho W_{A\theta}^{\dagger},\,\,\,\sigma_{\theta
}:=W_{B\theta}\sigma W_{B\theta}^{\dagger},\,
\]
where%

\[
W_{\theta}:=e^{\sqrt{-1}\left(  \theta^{1}P-\theta^{2}Q\right)  }.\,\theta\in%
\mathbb{R}
^{2},
\]
is a Weyl operator, and $A$ and $B$ are real invertible $2\times2$ matrices.
Appling Theorem\thinspace\ref{th:quantum-random-2}, we have
\[
\delta^{q}\left(  \mathcal{E},\mathcal{F}\right)  =\inf_{\Phi}\sup_{\theta
\in\Theta}\left\Vert \Phi\left(  \rho_{\theta}\right)  -\sigma_{\theta
}\right\Vert _{1}.
\]

\subsection{Restriction to covariant maps}

The argument of this section draws upon \cite{Kruger}. For any $\Phi$, define
\[
\Phi_{\theta}\left(  X\right)  :=W_{B\theta}^{\dagger}\Phi\left(  W_{A\theta
}XW_{A\theta}^{\dagger}\right)  W_{B\theta}.
\]
Then,
\begin{align*}
\delta^{q}\left(  \mathcal{E},\mathcal{F}\right)   &  =\inf_{\Phi}\sup
_{\theta\in\Theta}\left\Vert \Phi_{\theta}\left(  \rho\right)  -\sigma
\right\Vert _{1}\\
&  =\inf_{\Phi}\sup_{\theta\in\Theta}\sup_{\left\Vert X\right\Vert \leq
1}\mathrm{tr}\,\left(  \Phi_{\theta}\left(  \rho\right)  -\sigma\right)  X\\
&  =\inf_{\Phi}\sup_{\left\Vert X\right\Vert \leq1}\sup_{\theta\in\Theta
}\mathrm{tr}\,\left(  \Phi_{\theta}\left(  \rho\right)  -\sigma\right)  X\\
&  \geq\inf_{\Phi}\sup_{\left\Vert X\right\Vert \leq1}\mathbf{M}_{\theta
}\mathrm{tr}\,\left(  \Phi_{\theta}\left(  \rho\right)  -\sigma\right)  X,
\end{align*}
where $\mathbf{M}_{\theta}$ is the invariant mean of the translation group in
$%
\mathbb{R}
^{2}$. Note, if $\rho$ is a density operator, the map
\[
\Phi_{\ast}\left(  \rho\right)  :X\rightarrow\mathbf{M}_{\theta}%
\mathrm{tr}\,\Phi_{\theta}\left(  \rho\right)  X
\]
is linear and bounded, and maps $\mathbf{1}$ to $1$. Also, the mapping
$\Phi_{\ast}:\rho\rightarrow\Phi_{\ast}\left(  \rho\right)  $ is linear, and
covariant :%
\begin{equation}
\Phi_{\ast}\left(  W_{A\theta}\rho W_{A\theta}^{\dag}\right)  \left[
X\right]  =\Phi_{\ast}\left(  \rho\right)  \left[  W_{B\theta}^{\dag
}XW_{B\theta}\right]  . \label{covariant-2}%
\end{equation}
Thus,
\begin{align*}
&  \sup_{\theta\in\Theta}\sup_{\left\Vert X\right\Vert \leq1}\left(
\Phi_{\ast}\left(  \rho_{\theta}\right)  \left[  X\right]  -\mathrm{tr}%
\,\sigma_{\theta}X\right) \\
&  =\sup_{\theta\in\Theta}\sup_{\left\Vert X\right\Vert \leq1}\left(
\Phi_{\ast}\left(  \rho\right)  \left[  W_{B\theta}^{\dag}XW_{B\theta}\right]
-\mathrm{tr}\,\sigma W_{B\theta}^{\dag}XW_{B\theta}\right) \\
&  =\sup_{\left\Vert X\right\Vert \leq1}\left(  \Phi_{\ast}\left(
\rho\right)  \left[  X\right]  -\mathrm{tr}\,\sigma X\right)  .
\end{align*}
Hence, in optimizing $\Phi$, we just have to consider $\Phi_{\ast}$ with
covariant property (\ref{covariant-2}).

$\Phi_{\ast}$ is seemingly difficult to handle, since its output state may not
be normal, i.e., may not have the density. However, it turns out that
$\Phi_{\ast}$ with non-normal output is not optimal.

Since $\mathcal{B}_{1}\left(  \mathcal{H}\right)  $ is the dual of \ the space
of compact operators $\mathcal{B}_{0}\left(  \mathcal{H}\right)  $, there is a
positive $Y_{\rho}\in\mathcal{B}_{1}\left(  \mathcal{H}\right)  $ with
\[
\Phi_{\ast}\left(  \rho\right)  \left[  X\right]  =\mathrm{tr}\,Y_{\rho
}X,\,\,\forall X\in\mathcal{B}_{0}\left(  \mathcal{H}\right)  .
\]
Consider the map
\[
\rho\rightarrow\sup_{P:\text{finite rank projector}}\Phi_{\ast}\left(
\rho\right)  \left[  P\right]  =\mathrm{tr}\,Y_{\rho}.
\]
Since this is linear in $\rho$, positive and bounded, there is a positive
bounded operator $T$ with
\[
\mathrm{tr}\,Y_{\rho}=\mathrm{tr}\,T\rho.
\]
Due to covariant property of $\Phi_{\ast}$ (\ref{covariant-2}), we have
\[
\mathrm{tr}\,T\,(W_{A\theta}\rho W_{A\theta}^{\dag})=\mathrm{tr}\,T\rho
\]
for any $\rho$. Therefore, $T$ commutes $W_{A\theta}$ for all $\theta\in%
\mathbb{R}
^{2}$. Therefore, $T=c\mathbf{1}$. Thus, $c=\mathrm{tr}\,Y_{\rho}$ is
independent of the input $\rho$. Therefore, $\rho_{\ast}:=\frac{1}{c}Y_{\rho}$
is a density operator. We denote by $\Phi_{\ast}^{\prime}$ the CPTP map which
sends $\rho$ to $\rho_{\ast}$.

Letting $\left\{  X_{n}\right\}  $ be a sequence of compact operators such
that $\lim_{n\rightarrow\infty}\mathrm{tr}\,\left(  c\rho_{\ast}%
-\sigma\right)  X_{n}=\mathrm{tr}\,\left[  c\rho_{\ast}-\sigma\right]  _{-}$
($0\leq c\leq1$),
\begin{align}
&  \sup_{\left\Vert X\right\Vert \leq1}\left(  \Phi_{\ast}\left(  \rho\right)
\left[  X\right]  -\mathrm{tr}\,\sigma X\right) \nonumber\\
&  =2\sup_{X\leq0,\left\Vert X\right\Vert \leq1}\left(  \Phi_{\ast}\left(
\rho\right)  \left[  X\right]  -\mathrm{tr}\,\sigma X\right) \nonumber\\
&  \geq2\lim_{n\rightarrow\infty}\mathrm{tr}\,\left(  c\rho_{\ast}%
-\sigma\right)  X_{n}\nonumber\\
&  =2\mathrm{tr}\,\left[  c\rho_{\ast}-\sigma\right]  _{-}\nonumber\\
&  \geq2\mathrm{tr}\,\left[  \rho_{\ast}-\sigma\right]  _{-}\,\,=\left\Vert
\Phi_{\ast}^{\prime}\left(  \rho\right)  -\sigma\right\Vert _{1}.
\label{dist-lower}%
\end{align}
Therefore, $\Phi_{\ast}^{\prime}$ is at least as good as $\Phi_{\ast}$.

After all, we have
\[
\inf_{\Phi}\sup_{\theta\in\Theta}\left\Vert \Phi\left(  \rho_{\theta}\right)
-\sigma_{\theta}\right\Vert _{1}=\inf_{\Phi}\left\Vert \Phi\left(
\rho\right)  -\sigma\right\Vert _{1},
\]
where $\Phi$ runs over all the CPTP maps with
\begin{equation}
\Phi\left(  W_{A\theta}\rho W_{A\theta}^{\dagger}\right)  =W_{B\theta}%
\Phi\left(  \rho\right)  W_{B\theta}^{\dagger}, \label{covariant}%
\end{equation}
or
\begin{equation}
\Phi^{\ast}\left(  W_{B\theta}^{\dagger}XW_{B\theta}\right)  =W_{A\theta
}^{\dagger}\Phi^{\ast}\left(  X\right)  W_{A\theta}. \label{covariant-d}%
\end{equation}

\subsection{Characterization of covariant maps}

Inserting $X=W_{\xi}$ to (\ref{covariant-d}), one has%
\[
e^{-\sqrt{-1}\xi^{T}JB\theta}W_{A\theta}\Phi^{\ast}\left(  W_{\xi}\right)
=\Phi^{\ast}\left(  W_{\xi}\right)  W_{A\theta},
\]
where
\[
J=\left[
\begin{array}
[c]{cc}%
0 & 1\\
-1 & 0
\end{array}
\right]  .
\]
Since this holds for any $\theta$ and $\xi$, we have%

\[
\Phi^{\ast}\left(  W_{\xi}\right)  =c\left(  \xi\right)  W_{C\xi},
\]
where $C$ satisfies
\[
C^{T}JA=JB.
\]
Using the identity
\[
A^{T}JA=\left(  \det A\right)  J,
\]
or
\[
JA=\left(  \det A\right)  A^{T-1}J,
\]
we have
\[
C=\frac{\det B}{\det A}AB^{-1}.
\]

Suppose $\det A=\det B$, then
\[
\det C=1.
\]
Hence, according to Lemma\thinspace\ref{lem:ccr-cp}, for $\Phi^{\ast}$ to be
identity preserving and completely positive, $c\left(  \xi\right)  $ has to be
a characteristic function of a classical probability distribution $F$ over $%
\mathbb{R}
^{2}$,
\[
c\left(  \xi\right)  =\int e^{\sqrt{-1}\left(  \xi^{1}x^{2}-\xi^{2}%
x^{1}\right)  }\frac{\mathrm{d}F\left(  \,x\right)  }{2\pi}.
\]
Letting $P_{\rho}$ denote the $P$-function of $\rho$, we have
\begin{align*}
\mathrm{tr}\,\rho W_{\xi}  &  =\mathrm{tr}\,\int P_{\rho}\left(  z\right)
W_{\xi}W_{z}\left\vert 0\right\rangle \left\langle z\right\vert \frac
{\mathrm{\,d}\,z}{2\pi}=\mathrm{tr}\,\int P_{\rho}\left(  z\right)
e^{\sqrt{-1}\left(  \xi^{1}z^{2}-\xi^{2}z^{1}\right)  }W_{z}W_{\xi}\left\vert
0\right\rangle \left\langle z\right\vert \frac{\mathrm{\,d}\,z}{2\pi}\\
&  =\left\langle 0\right\vert \left.  \xi\right\rangle \int P_{\rho}\left(
z\right)  e^{\sqrt{-1}\left(  \xi^{1}z^{2}-\xi^{2}z^{1}\right)  }%
\frac{\mathrm{\,d}\,z}{2\pi}.
\end{align*}
and thus,
\begin{align*}
&  \left\langle 0\right\vert \left.  \xi\right\rangle \int P_{\Phi\left(
\rho\right)  }\left(  z\right)  e^{\sqrt{-1}\left(  \xi^{1}z^{2}-\xi^{2}%
z^{1}\right)  }\frac{\mathrm{\,d}\,z}{2\pi}\\
&  =\mathrm{tr}\,\Phi\left(  \rho\right)  W_{\xi}=\mathrm{tr}\,\rho\Phi^{\ast
}\left(  W_{\xi}\right)  =c\left(  \xi\right)  \mathrm{tr}\,\rho W_{C\xi}\\
&  =c\left(  \xi\right)  \left\langle 0\right\vert \left.  \xi\right\rangle
\int P_{\rho}\left(  z\right)  e^{\sqrt{-1}\left(  \left(  C\xi\right)
^{1}z^{2}-\left(  C\xi\right)  ^{2}z^{1}\right)  }\frac{\mathrm{\,d}\,z}{2\pi
}\\
&  =c\left(  \xi\right)  \left\langle 0\right\vert \left.  \xi\right\rangle
\int P_{\rho}\left(  z\right)  e^{\sqrt{-1}\det C\left(  \xi^{1}z^{2}-\xi
^{2}z^{1}\right)  }\frac{\mathrm{\,d}\,z}{2\pi}.
\end{align*}
Therefore,
\[
\int P_{\Phi\left(  \rho\right)  }\left(  z\right)  e^{\sqrt{-1}\left(
\xi^{1}z^{2}-\xi^{2}z^{1}\right)  }\frac{\mathrm{\,d}\,z}{2\pi}=c\left(
\xi\right)  \int P_{\rho}\left(  z\right)  e^{\sqrt{-1}\left(  \xi^{1}%
z^{2}-\xi^{2}z^{1}\right)  }\frac{\mathrm{\,d}\,z}{2\pi}.
\]
Taking inverse Fourer transform of both sides, we have
\[
P_{\Phi\left(  \rho\right)  }\left(  x\right)  =\int P_{\rho}\left(
x-y\right)  \mathrm{d}F\left(  \,y\right)  .
\]
Therefore, (\ref{covariant}) is equivalent to
\begin{align}
\Phi\left(  \rho\right)   &  =\int\int P_{\rho}\left(  x-y\right)  \left\vert
x\right\rangle \left\langle x\right\vert \frac{\mathrm{\,d}\,x}{2\pi
}\mathrm{d}F\left(  \,y\right) \nonumber\\
&  =\int\int P_{\rho}\left(  x\right)  \left\vert x+y\right\rangle
\left\langle x+y\right\vert \frac{\mathrm{\,d}\,x}{2\pi}\mathrm{d}F\left(
\,y\right) \nonumber\\
&  =\int W_{y}\left(  \int P_{\rho}\left(  x\right)  \left\vert x\right\rangle
\left\langle x\right\vert \frac{\mathrm{\,d}\,x}{2\pi}\right)  W_{y}^{\dagger
}\mathrm{d}F\left(  \,y\right) \nonumber\\
&  =\int W_{y}\,\rho\,W_{y}^{\dagger}\,\mathrm{d}F\left(  \,y\right)  ,
\label{shift-A=B}%
\end{align}
which is analogous to its classical version\thinspace\cite{Torgersen:72}.

On the other hand, by Lemma\thinspace\ref{lem:ccr-cp}, if $A\neq B$, $c\left(
\xi\right)  $ has to be a non-commutative characteristic function
\[
c\left(  \xi\right)  =\omega\left(  W_{\Omega\xi}\right)  ,
\]
where $\omega$ is a state. $\Omega$ is an operator satisfying%
\[
J-C^{T}JC=\Omega^{T}J\Omega,
\]
or
\[
\Omega=\left(  1-\det C\right)  ^{1/2}S=\left(  1-\frac{\det B}{\det
A}\right)  ^{1/2}S,
\]
with
\[
\det S=1.
\]
Hence,
\begin{equation}
\mathrm{tr}\,\Phi\left(  \rho\right)  W_{\xi}=\mathrm{tr}\,\rho\Phi^{\ast
}\left(  W_{\xi}\right)  =\omega\left(  W_{\Omega\xi}\right)  \,\mathrm{tr}%
\,\rho W_{C\xi}. \label{shift-C-full}%
\end{equation}

\subsection{Gaussian shift models}

When $\rho$ is gaussian state with mean value zero, $\rho$ satisfies
\begin{equation}
\mathrm{tr}\,\rho W_{\xi}=e^{-\xi^{T}\Sigma_{\rho}\xi/4}, \label{gauss}%
\end{equation}
where
\[
\frac{1}{2}\Sigma_{\rho}=\left[
\begin{array}
[c]{cc}%
\mathrm{tr}\,\rho Q^{2} & \frac{1}{2}\mathrm{tr}\,\rho\left(  PQ+QP\right) \\
\frac{1}{2}\mathrm{tr}\,\rho\left(  PQ+QP\right)  & \mathrm{tr}\,\rho P^{2}%
\end{array}
\right]  .
\]

Suppose $\det A=\det B$. Then, due to (\ref{shift-A=B}),
\[
\mathcal{E\geq}_{0}^{q}\,\mathcal{F}\Leftrightarrow\,\Sigma_{\rho}\leq
\Sigma_{\sigma}\text{.}%
\]
In classical case, it had been shown that the same condition on the variances
is necessary and sufficient for $\mathcal{E\geq}_{0}\,\mathcal{F}$%
\thinspace\cite{Torgersen:72}\cite{Torgersen}.

Suppose $\det A\neq\det B$. Then, if $\mathcal{E\geq}_{0}^{q}\,\mathcal{F}$,
due to (\ref{shift-C-full}), $\omega\left(  W_{\Omega\xi}\right)  $ is also
Gaussian,%
\[
\omega\left(  W_{\Omega\xi}\right)  =e^{-\xi^{T}\Omega^{T}\Sigma_{\omega
}\Omega\xi/4}.
\]
Also, by (\ref{shift-C-full}), we have
\[
\Sigma_{\Phi\left(  \rho\right)  }=\Omega^{T}\Sigma_{\omega}\Omega+C^{T}%
\Sigma_{\rho}C,
\]
or, with $A^{\prime}=AB^{-1}$%
\begin{equation}
S^{T}\Sigma_{\omega}S=\frac{1}{1-\left(  \det A^{\prime}\right)  ^{-1}}\left(
\Sigma_{\Phi\left(  \rho\right)  }-\left(  \det A^{\prime}\right)
^{-2}A^{\prime T}\Sigma_{\rho}A^{\prime}\right)  . \label{sigma=}%
\end{equation}
Therefore, by Lemma\thinspace\ref{lem:gauss-state}, for $\omega$ with
(\ref{shift-C-full}) to exist, the following is necessary and sufficient:
\begin{align*}
&  \Sigma_{\omega}+\sqrt{-1}J\geq0\\
&  \Leftrightarrow S^{T}\left(  \Sigma_{\omega}+\sqrt{-1}J\right)
S=S^{T}\Sigma_{\omega}S+\sqrt{-1}J\geq0\\
&  \Leftrightarrow\mathrm{tr}\,S^{T}\Sigma_{\omega}S\geq0,\,\det S^{T}%
\Sigma_{\omega}S\geq1.
\end{align*}
By (\ref{sigma=}), these are equivalent to
\begin{align}
\mathrm{tr}\left(  \Sigma_{\Phi\left(  \rho\right)  }-\left(  \det A^{\prime
}\right)  ^{-2}A^{\prime T}\Sigma_{\rho}A^{\prime}\right)  =\mathrm{tr}%
\Sigma_{\Phi\left(  \rho\right)  }-\left(  \det A^{\prime}\right)
^{-2}\mathrm{tr}\,A^{\prime}A^{\prime T}\Sigma_{\rho}\,\geq0,  &
\label{gauss-sufficient}\\
\left(  1-\left(  \det A^{\prime}\right)  ^{-1}\right)  ^{-2}\det\left[
\left(  \Sigma_{\Phi\left(  \rho\right)  }-\left(  \det A^{\prime}\right)
^{-2}A^{\prime T}\Sigma_{\rho}A^{\prime}\right)  \right]   &  \geq1.
\label{gauss-sufficinet-0}%
\end{align}

Further, we suppose $\Sigma_{\Phi\left(  \rho\right)  }=\Sigma_{\rho}%
=a^{2}\mathbf{1}$. Then, these conditions can be written as
\begin{align*}
2\left(  \det A^{\prime}\right)  ^{2}  &  \geq\mathrm{tr}\,A^{\prime
T}A^{\prime}\\
\det\left[  \left(  \mathbf{1}-\left(  \det A^{\prime}\right)  ^{-2}A^{\prime
T}A^{\prime}\right)  \right]   &  \geq a^{-4}\left(  1-\left(  \det A^{\prime
}\right)  ^{-1}\right)  ^{2}.
\end{align*}
Without loss of generality, let
\[
A^{\prime T}A^{^{\prime}}=\left[
\begin{array}
[c]{cc}%
\alpha & 0\\
0 & \beta
\end{array}
\right]  .
\]
where $\alpha\geq0$, $\beta\geq0$. Then, it follows that%
\begin{align}
\alpha\left(  \beta-1\right)  +\beta\left(  \alpha-1\right)   &
\geq0,\label{gauss-sufficient-1}\\
\left(  \alpha-1\right)  \left(  \beta-1\right)   &  \geq a^{-4}\left(
\sqrt{\alpha\beta}-1\right)  ^{2}. \label{gauss-sufficient-2}%
\end{align}
By (\ref{gauss-sufficient-2}), $\alpha-1$ and $\beta-1$ have to have the same
sign. Therefore, by (\ref{gauss-sufficient-1}),
\begin{equation}
\alpha\geq1,\,\beta\geq1. \label{gauss-sufficient-3}%
\end{equation}

In classical case, with $\Sigma_{\Phi\left(  \rho\right)  }=\Sigma_{\rho
}=a^{2}\mathbf{1}$, $\mathcal{E\geq}_{0}\,\mathcal{F}$ is equivalent to
(\ref{gauss-sufficient-3}) \cite{HansenTorgersen}\cite{Torgersen}. Even in
quantum case, when $a\gg1$, the system is almost classical. Therefore, it is
expected that (\ref{gauss-sufficient-3}) is sufficient for $\mathcal{E\geq
}_{0}^{q}\,\mathcal{F}$. In fact, this is easily verified by noticing the
right hand side of (\ref{gauss-sufficient-2}) is almost $0$.

However, when $a$ is not very large, quantum case is very much different from
classical case. For example, suppose $a=1$. Then, (\ref{gauss-sufficient-2})
is written as
\[
\alpha+\beta\leq2\sqrt{\alpha\beta}.
\]
Therefore, we have to have
\[
\alpha=\beta.
\]
Hence, $\mathcal{E\geq}_{0}^{q}\,\mathcal{F}$ is equivalent to
\[
AB^{-1}=A^{\prime}=\sqrt{\alpha}O,
\]
where $\alpha\geq1$ and $O$ is an orthogonl matrix. This is very much stronger
than the classical condition (\ref{gauss-sufficient-3}).

\bigskip

\appendix

\section{Backgrounds from analysis}

\subsection{Weak and weak* topology}

For the detail of the following statements, see \cite{Conway}, for example.
Let $E$ and $E^{\prime}$ be a normed Banach space and the totality of
continuous linear functional on $E$, respectively. If we take as the norm of
$f$ $\in E^{\prime}$ the operator norm $\left\Vert f\right\Vert $ as a
functional, then $E^{\prime}$ become a normed linear space called conjugate
space. $E^{\prime}$ is complete, and thus is a Banach space. The topology
introduced by $\left\Vert f\right\Vert $ is called strong topology.

The weak* topology $\sigma\left(  E^{\prime},E\right)  $ in $E^{\prime}$ is
indtroduced as follows. For every $\alpha>0$ and every finite number of
elements $x_{i}$ ($i=1,\cdots,n$), we denote by $W\left(  x_{1},\cdots
x_{n},\alpha\right)  $ the set of all $f$ such that $\left\vert f\left(
x_{i}\right)  \right\vert \leq\alpha$. The topology for which the sets
$W\left(  x_{1},\cdots x_{n},\alpha\right)  $ form the fundamental system of
the neighbours of zero is called weak* topology. In other words, an open set
containig 0 is a union of the sets in the form of $W\left(  x_{1},\cdots
x_{n},\alpha\right)  $. \ The weak topology $\sigma\left(  E,E^{\prime
}\right)  $ in $E$ is defined by exchanging the role of $E$ and $E^{\prime}$ above.

The weak and weak* topologies are locally convex topologies since sets
$W\left(  x_{1},\cdots x_{n},\alpha\right)  $ are convex.

The sequence $\left\{  f_{i}\right\}  _{i=1}^{n}$ in $E^{\prime}$ is called
weakly convergent to the functional $f_{0}$ if it converges to $f_{0}$ in the
weak* topology. In order for $\left\{  f_{i}\right\}  _{i=1}^{\infty}$ to be
weakly convergent to $f_{0}$, it is necessary and sufficient that
$\lim_{n\rightarrow\infty}f_{n}\left(  x\right)  =f_{0}\left(  x\right)  $ for
every $x\in E$.

A convex set in a normed linear space $E^{\prime}$ has the same closure both
in the initial topology and in the weak* topology $\sigma\left(  E^{\prime
},E\right)  $. In particular, if the sequence $\left\{  f_{i}\right\}
_{i=1}^{n}$ is weakly convergent to $f_{0}$, there exists a sequence of linear
combinatitons $\left\{  \sum_{i=1}^{m}\lambda_{i}^{m}f_{i}\right\}  $
converging in the norm to $f_{0}$.

Every closed sphere in $E^{\prime}$ is compact in the weak* topology
$\sigma\left(  E^{\prime},E\right)  $ (Alaoglu's theorem ).

\subsection{Product topology and Tychonoff's theorem}

\label{appendix:product}

Given a set $Y$ and topological spaces $\left(  X_{y},\mathfrak{x}_{y}\right)
$, and we furnish $\prod_{y\in Y}X_{y}$ with the product topology, or the
coarsest topology which makes projection $P:\prod_{y\in Y}X_{y}\rightarrow
X_{y}$, $P\left(  x\right)  =x_{y}$ continuous. A local base of this topology
is a family of sets in the form of%
\[
\bigcap_{y\in F}P^{-1}\left(  U_{y}\right)  ,
\]
where each $U_{y}$ is an open set in $X_{y}$ and $F$ is a finite subset of $Y$
\cite{Kelley}. (Note that $\bigcap_{y\in Y}P^{-1}\left(  U_{y}\right)  $ is
not necessarily open.)

Weak* topology $\sigma\left(  E^{\prime},E\right)  $ can be viewed as a
product topology, where $X=\mathbb{R}$ and $Y=E$.

\begin{theorem}
\label{th:tychonoff}(Tychonoff's theorem) The cartesian product of a
collection of compact topological spaces is compact relative to the product topology.
\end{theorem}

.

\section{Proof of (\ref{norm-upper})}

Let
\[
X_{R}:=\frac{X+\overline{X}}{2},\,X_{I}:=\frac{X-\overline{X}}{2\sqrt{-1}}.
\]
Define $X_{R+}$, $X_{R-}$, $X_{I+}$ and $X_{I-}$ so that $X_{R+}$, $X_{R-}$,
$X_{I+}$ and $X_{I-}$ are positive self-adjoint, and satisfy
\[
X_{R+}-X_{R-}=X_{R},\,X_{I+}-X_{I-}=X_{I}.
\]
Then,
\begin{align*}
&  \left\Vert \Lambda\left(  X\right)  \right\Vert _{1}\\
&  =\left\Vert \Lambda\left(  X_{R+}\right)  -\Lambda\left(  X_{R-}\right)
+\sqrt{-1}\left(  \Lambda\left(  X_{I+}\right)  -\Lambda\left(  X_{I-}\right)
\right)  \right\Vert _{1}\\
&  \leq\left\Vert \Lambda\left(  X_{R+}\right)  \right\Vert _{1}+\left\Vert
\Lambda\left(  X_{R-}\right)  \right\Vert _{1}+\left\Vert \Lambda\left(
X_{I+}\right)  \right\Vert _{1}+\left\Vert \Lambda\left(  X_{I-}\right)
\right\Vert _{1}\\
&  =\mathrm{tr}\,\Lambda\left(  X_{R+}\right)  +\mathrm{tr}\,\Lambda\left(
X_{R-}\right)  +\mathrm{tr}\,\Lambda\left(  X_{I+}\right)  +\mathrm{tr}%
\,\Lambda\left(  X_{I-}\right) \\
&  =\mathrm{tr}\,X_{R+}+\mathrm{tr}\,X_{R-}+\mathrm{tr}\,X_{I+}+\mathrm{tr}%
\,X_{I-}\\
&  =\left\Vert X_{R}\right\Vert _{1}+\left\Vert X_{I}\right\Vert _{1}\\
&  =\left\Vert \frac{X+\overline{X}}{2}\right\Vert _{1}+\left\Vert
\frac{X-\overline{X}}{2\sqrt{-1}}\right\Vert _{1}\leq\left\Vert X\right\Vert
_{1}+\left\Vert \overline{X}\right\Vert _{1}\\
&  =2\left\Vert X\right\Vert _{1}.
\end{align*}

\section{CP maps on CCR algebra}

Let $J$ be a bilinear antisymmetiric form on $%
\mathbb{R}
^{2n}$ induced by the matrics $\left[  J_{i,j}\right]  $. (We will not
distinguish bilinear forms and their implimenting matrices.) Define Weyl
operators $W_{\xi}$ ($\xi\in%
\mathbb{R}
^{2n}$) which are unitary operators with%
\begin{align*}
W_{0}  &  =\mathbf{1},\\
W_{\xi}W_{\xi^{\prime}}  &  =\exp\left(  -\frac{\sqrt{-1}}{2}J\left(  \xi
,\xi^{\prime}\right)  \right)  W_{\xi+\xi^{\prime}}.\\
&  =\exp\left(  -\sqrt{-1}J\left(  \xi,\xi^{\prime}\right)  \right)
W_{\xi^{\prime}}W_{\xi}.
\end{align*}
The algebra generated by Weyl operators is called CCR algebra and denoted by
$\mathrm{CCR}(J)$.

Given a state $\omega$, $\omega\left(  W_{\xi}\right)  $ is called
\textit{characteristic function} of the state $\omega$.

Define array of self-adjoint operators
\[
\vec{R}=\left(  Q^{1},P^{1},Q^{2},P^{2},\cdots,Q^{n},P^{n}\right)
\]
by
\[
W_{\xi}=\exp\sqrt{-1}\sum_{j=1}^{n}\left(  \xi^{2j}Q^{j}-\xi^{2i-1}%
P^{j}\right)  .
\]
Then, they satisfy
\[
\left[  R_{j},R_{k}\right]  =\sqrt{-1}J_{j,k}.
\]

\begin{lemma}
\label{lem:weyl-commute}(Lemma\thinspace2.2 of \cite{Kruger})Let
$X\in\mathrm{CCR}(J)$. If
\[
W_{\xi}XW_{\xi}^{\dagger}=\exp\left(  -\sqrt{-1}J\left(  \xi,\xi^{\prime
}\right)  \right)  X
\]
for any $\xi\in%
\mathbb{R}
^{2n}$, then $X=const.\times W_{\xi^{\prime}}$.
\end{lemma}

\bigskip

\begin{lemma}
\label{lem:ccr-cp}(Theorem 2.3 of \cite{Demoen}) Let $c\left(  \xi\right)  $
be a functional over $%
\mathbb{R}
^{2n}$ with $c\left(  0\right)  =1$. Then, the linear map $W_{\xi}\rightarrow$
$c\left(  \xi\right)  W_{A\xi}$ is liner and completely positive only if
$c\left(  \xi\right)  $ is a characteristic function of a state $\omega$ over
$\mathrm{CCR}(\tilde{J})$, where
\[
\tilde{J}\left(  \xi,\xi^{\prime}\right)  :=J\left(  \xi,\xi^{\prime}\right)
-J\left(  A\xi,A\xi^{\prime}\right)  .
\]

\end{lemma}

\bigskip

The state $\omega$ is called a Gaussian state if its characteristic function
is Gaussian,
\[
\omega\left(  W_{\xi}\right)  =\exp\left(  -\frac{\xi^{T}\,\Sigma\,\xi}%
{4}+\sqrt{-1}\xi\cdot\eta\right)  \text{,}%
\]
where $\Sigma$ is a real positive symmetric matrix, and $\eta\in%
\mathbb{R}
^{2n}$. \ It holds that%
\begin{align*}
\eta &  =\omega\left(  \vec{R}\right)  ,\,\\
\frac{1}{2}\left(  \Sigma+\sqrt{-1}J\right)   &  =\omega\left(  \left(
R_{j}-\eta_{j}\right)  \left(  R_{k}-\eta_{k}\right)  \right)  .
\end{align*}

\begin{lemma}
\ \label{lem:gauss-state} (p.\thinspace18 of \cite{Kruger}) $\exp\left(
-\frac{\xi^{T}\,\Sigma\,\xi}{4}+\sqrt{-1}\xi\cdot\eta\right)  $ is a
characteristic function of a state over $\mathrm{CCR}(J)$ if and only if%
\[
\gamma+\sqrt{-1}J\geq0.
\]

\end{lemma}

Below, let $n=1$ and
\[
J:=\left[
\begin{array}
[c]{cc}%
0 & 1\\
-1 & 0
\end{array}
\right]  ,
\]
and define the vacume state $\left\vert 0\right\rangle $ by the equation
\[
\frac{Q_{1}+\sqrt{-1}P_{1}}{\sqrt{2}}\left\vert 0\right\rangle =0.
\]
Also, define coherent state $\left\vert z\right\rangle $ with $z\in%
\mathbb{R}
^{2}$ by
\[
\left\vert z\right\rangle :=W_{z}\left\vert 0\right\rangle ,
\]
Then, any density matrix $\rho$ can be written
\[
\rho=\int P_{\rho}\left(  z\right)  \left\vert z\right\rangle \left\langle
z\right\vert \frac{\mathrm{d}z}{2\pi},
\]
where $P_{\rho}\left(  z\right)  $ is called \textit{P-function} of $\rho$.

\end{document}